\def\var{\text{Var}}
\def\cov{\text{Cov}}
\newtheorem{thm}{Theorem}
\newtheorem{corl}{Corollary}
\begin{document}

\title{Dynamic modeling of mean-reverting spreads for statistical arbitrage}

\author{
K. Triantafyllopoulos \footnote{Department of Probability and
Statistics, Hicks Building, University of Sheffield, Sheffield S3
7RH, UK, email: {\tt k.triantafyllopoulos@sheffield.ac.uk} } \and G.
Montana \footnote{Department of Mathematics, Statistics Section,
Imperial College London, London SW7 2AZ, UK, email: {\tt
g.montana@imperial.ac.uk } } }

\date{\today}

\maketitle


\begin{abstract}

Statistical arbitrage strategies, such as pairs trading and its
generalizations, rely on the construction of mean-reverting spreads
enjoying a certain degree of predictability. Gaussian linear
state-space processes have recently been proposed as a model for
such spreads under the assumption that the observed process is a
noisy realization of some hidden states. Real-time estimation of the unobserved spread process
can reveal temporary market inefficiencies which can then be exploited
to generate excess returns. Building on previous work, we embrace the state-space framework for modeling spread processes and extend this methodology along three different directions.
First, we introduce time-dependency in the model parameters, which
allows for quick adaptation to changes in the data generating
process. Second, we provide an on-line estimation algorithm that can
be constantly run in real-time. Being computationally fast, the
algorithm is particularly suitable for building aggressive trading
strategies based on high-frequency data and may be used as a
monitoring device for mean-reversion. Finally, our framework
naturally provides informative uncertainty measures of all the
estimated parameters. Experimental results based on Monte Carlo
simulations and historical equity data are discussed, including a
co-integration relationship involving two exchange-traded funds.

\textit{Keywords:} mean reversion, pairs trading, state-space
models, time-varying autoregressive processes, dynamic regression,
statistical arbitrage.

\end{abstract}

\section{Introduction}\label{introduction}

A time series is known to exhibit mean reversion when, over a
certain period of time, is ``reverting'' to a constant mean. In
recent years, the notion of mean reversion has received a
considerable amount of attention in the financial literature. For
instance, there has been increasing interest in studying the
long-run properties of stock prices, with particular attention being
paid to investigate whether stock prices can be characterized as
random walks or mean reverting processes. If a price time series
evolves as a random walk, then any shock is permanent and there is
no tendency for the price level to return to a constant mean over
time; moreover, in the long run, the volatility of the process is
expected to grow without bound, and the time series cannot be
predicted based on historical observations. On the other hand, if a
time series of stock prices follows a mean reverting process,
investors may be able to forecast future returns by using past
information. Since the seminal work of \cite{Fama1988a} and
\cite{Poterba1988}, who first documented mean-reversion in stock
market returns during a long time horizon, several studies have been
carried out to detect mean reversion in several markets (e.g.
\cite{Chaudhuri2003}) and many asset classes (e.g. \cite{Deaton1992,
Jorion1996}).

Since future observations of a mean-reverting time series can
potentially be forecasted using historical data, a number of studies
have also examined the implications of mean reversion on portfolio
allocation and asset management; see \cite{Barberis2000} and
\cite{Carcano2005} for recent works. Active asset allocation
strategies based on mean-reverting portfolios, which generally fall
under the umbrella of \emph{statistical arbitrage}, have been
utilized by investment banks and hedge funds, with varying degree of
success, for several years. Possibly the simplest of such strategies
consists of a portfolio of only two assets, as in \emph{pairs
trading}. This trading approach consists in going long a certain
asset while shorting another asset in such a way that the resulting
portfolio has no net exposure to broad market moves. In this sense,
the strategy is often described as \emph{market neutral}. Entire
monographs have been written to illustrate how pairs trading works,
how it can be implemented in real settings, and how its performance
has evolved in recent years (see, for instance,
\cite{Vidyamurthy2004} and \cite{Pole2007}). The underlying
assumption of pairs trading is that two financial instruments with
similar characteristics must be priced more or less the same.
Accordingly, the first step consists in finding two financial
instruments whose prices, in the long term, are expected to be tied
together by some common stochastic trend. What this implies is that,
although the two time series of prices may not necessarily move in
the same direction at all times, their spread (for instance, the
simple price difference) will fluctuate around an equilibrium level.
Since the spread quantifies the degree of mispricing of one asset
relative to the other one, these strategies are also refereed to as
\emph{relative-value}. If a common stochastic trend indeed exists
between the two chosen assets, any temporary deviation from the
assumed mean or equilibrium level is likely to correct itself over
time. The predictability of this portfolio can then be exploited to
generate excess returns: a trader, or an algorithmic trading system,
would open a position every time a substantially large deviation
from the equilibrium level is detected and would close the
position when the spread has reverted back to the its mean. This
simple concept can be extended in several ways, for instance by
replacing one of the two assets with an artificial one (e.g. a
linear combination of asset prices), with the purpose of exploiting
the same notions of relative-value pricing and mean-reversion,
although in different ways; some relevant work along these lines has
been documented, among others, by \cite{Montana2008} and
\cite{Montana2008c}, who describe statistical arbitrage strategies
involving futures contracts and exchange-traded funds (ETFs),
respectively. One aspect that has not been fully investigated in the studies above
is how to explicitly model the resulting observed spread. A
stochastic model describing how the spread evolves over time is
highly desirable because it allows the analyst to precisely
characterize and monitor some of its salient properties, such as
mean-reversion. Moreover, improved trading rules may be built around
specific properties of the adopted spread process.

Recently,
\cite{Elliott2005} suggested that Gaussian linear state-space
processes may be suitable for modeling mean-reverting spreads
arising in pairs trading, and described how such models can yield
statistical arbitrage strategies. Their main observation is that the
observed process should be seen as a noisy realization of an
underlying hidden process describing the true spread, which may
capture the true market conditions; thus, a comparison of the
estimated unobserved spread process with the observed one may lead
to the discovery of temporary market inefficiencies. Based on the
additional assumption that the model parameters do not vary over
short periods of time, \cite{Elliott2005} suggested to use the EM
algorithm, an iterative procedure for maximum likelihood estimation,
for tracking the hidden process and estimating the other unknown
model parameters. To make the exposition self-contained, we briefly
review their model in Section \ref{mean_reversion}, and state under
what conditions the stochastic process is mean-reverting.

In this paper we build upon the model by \cite{Elliott2005} and
extend their methodology in a number of ways. First, in Section
\ref{newmodel}, we introduce time-dependency in the model
parameters. The main advantage of this formulation is a gain in
flexibility, as the model is able to adapt quickly to changes in the
data generating process; Section \ref{motivations} further motivates
our formulation and discusses its potential advantages. In Section
\ref{model} we derive new conditions that need to be satisfied for
a model with time-varying parameters to be mean-reverting. In
Section \ref{bayesian} we describe a Bayesian framework for
parameter estimation which then leads to a recursive parameter
estimation procedure suitable for real-time applications. The final algorithm
is detailed in Section \ref{algorithm}; an analysis and discussion
on the convergence properties of the
algorithm as well as practical suggestions on how
to specify the initial values and prior distributions are
provided. Unlike the EM algorithm, our estimation procedure also
produces uncertainty measures without any additional computational
costs. With a view on
statistical arbitrage, in Section \ref{trading} we add a note
discussing how pairs trading may be implemented using the spread
models proposed in this work and enumerate other important issues
involved in realistic implementations, together with some pointers
to the relevant literature. However, an empirical evalutation of trading strategies
is beyond the scope of this work. For further discussions on statistical
arbitrage approaches based on mean-reverting spreads and many
illustrative numerical examples the reader is referred to \cite{Pole2007}.

In Section \ref{results}, based on a battery of Monte Carlo
simulations, we demonstrate that posterior means estimated on-line
by our Bayesian algorithm recovers the true model parameters and can
be particularly advantageous when the analysts wishes to track
sudden changes in the mean-level of the spread and its
mean-reverting behavior. For instance, real-time monitoring may be
used to derive stop-loss rules in algorithmic trading. Two examples
involving real historical data are given in Section \ref{equity},
where the cointegrating relationship between a pair of stocks and a
pairs of ETFs are discussed. Final remarks are found in Section
\ref{discussion} and the proofs of arguments in Sections \ref{model}
and \ref{converge} can be found in the appendix.

\section{Time-invariant state-space models for mean-reverting spreads} \label{mean_reversion}

Throughout the paper we will assume that the trader has identified
two candidate financial instruments whose prices are observed at
discrete time points $t=1,2,\ldots$ and are denoted by $p^{(j)}_t$,
with $j=1,2$. At any given time $t$, let $y_t$ denote the price
spread, defined as
$$
y_t=\alpha + p^{(i)}_t - \beta p^{(j)}_t
$$
for some parameters $\alpha$ and $\beta$ which are usually estimated by ordinary
least squares (OLS) methods using historical data. It seems common practice to
select the order of $i$ and $j$ such that $y_t$ yields the largest
$\beta$ and the resulting spread captures as much information as possible about the
(linear) co-movement of the two assets. In Section \ref{equity} we briefly
mention how a penalized OLS model may be used for recursive estimation of a time-varying $\beta$. More generally, the observed spread $y_t$ may also be obtained in different ways or may represent
the return process of an initial price spread. One
of the two component processes $\{p_t^{(j)}\}$ may even be artificially built using a
linear combination of a basket of assets. For our purposes, the only requirement is
that the process $\{y_t\}$ is assumed to be mean-reverting.

Furthermore, following \cite{Elliott2005}, we assume that the observed spread $y_t$ is a
noisy realization of a true but unobserved spread or \emph{state} $x_t$. The state process $\{x_t\}$ is defined such that
\begin{equation}\label{mrp1}
x_t - x_{t-1} = a-bx_{t-1}+\varepsilon_t
\end{equation}
where $0<b<2$, $a$ is an unrestricted real number and $x_1$ is the
initial state. The restriction $0<b<2$ is imposed, because otherwise $\{x_t\}$
is non-stationary and thus mean reversion has probability zero to occur. The innovation series $\{\varepsilon_{t}\}$ is taken
to be an i.i.d. Gaussian process with zero mean and variance $C^2$,
and $\varepsilon_{t+1}$ is assumed to be uncorrelated of $x_t$, for
$t=1,2,\ldots$. Conditions for the state process to be
mean-reverting are established using standard arguments, as follows.
First, rewrite (\ref{mrp1}) as
$$
x_t=a+(1-b)x_{t-1}+\varepsilon_t
$$
Expanding on this, we obtain
$$
x_t=(1-b)^{t-1}x_1+a\sum_{i=0}^{t-2}(1-b)^i+\sum_{i=0}^{t-2}(1-b)^i\varepsilon_{t-i}
$$
Then, taking expectations and variances,
$$
E(x_t)= (1-b)^{t-1}\left\{E(x_1)-\frac{a}{b}\right\}+\frac{a}{b}
$$
and
$$
\var(x_t)=(1-b)^{2(t-1)} \left\{ \var(x_1)-
\frac{1}{1-(1-b)^2}\right\} +\frac{C^2}{1-(1-b)^2}
$$

It is observed that, when $|1-b|<1$, and regardless of $a$,
$\lim_{t\rightarrow\infty}(1-b)^{t-1}=0$ and therefore
$\lim_{t\rightarrow\infty}E(x_t)=a/b$. Therefore, in the long run,
the state process fluctuates around its mean level $a/b$. Otherwise,
when $|1-b|\geq 1$, $(1-b)^{t-1}$ is unbounded and hence $E(x_t)$ is
unbounded too. Analogously, when $|1-b|<1$ and regardless of $a$,
the variance $\var(x_t)$ converges to $C^2/\{1-(1-b)^2\}$. Conversely,
if $|1-b|\geq 1$, the variance of $x_t$ is unbounded with geometric
speed. It is concluded that the hidden process $\{x_t\}$ is mean
reverting when $1-b$ lies inside the unit circle. Adopting the
notation of \cite{Elliott2005}, we define $A=a$ and $B=1-b$, so that
the process $x_t$ can be rewritten as
\begin{equation}
x_t=A+Bx_{t-1}+\epsilon_t \label{ss1}
\end{equation}
Without loss of generality, we postulate that $\{y_t\}$ is
a noisy version of $\{x_t\}$ generated as
\begin{equation}
y_t=x_t+\omega_t \label{ss2}
\end{equation}
where $\{\omega_t\}$ is Gaussian white noise with variance $D^2$ and
$\omega_t$ is uncorrelated of $x_t$, for $t=1,2,\ldots$. From
(\ref{ss2}), it also follows that $\{y_t\}$ is a mean-reverting
process.

Note that, together with an initial distribution of the state $x_1$,
equations (\ref{ss1}) and (\ref{ss2}) define a Gaussian linear
state-space model with parameters $A,B,C,D$. State-space models were
originally developed by control engineers \citep{KRE60} and are
useful tools for expressing dynamic systems involving unobserved
state variables. The reader is also referred to \cite{Harvey1989} and \cite{WH97}
for book-length expositions.

\section{Time-varying dynamic models and on-line estimation} \label{newmodel}

\subsection{Preliminaries} \label{motivations}

The linear Gaussian state-space model described by equations
(\ref{ss1}) and (\ref{ss2}) contains the unknown parameters
$A,B,C$ and $D$ which need to be estimated using historical data.
When the parameters are known, the Kalman filter provides a
recursive procedure for estimating the state process $x_t$
\citep{KRE60}. Full derivations of the Kalman filter and lucid
explanations in a Bayesian framework can be found in
\cite{Meinhold1983}. In practice, maximum likelihood estimation
(MLE) of the unknown parameters is required in order to fully
specify the model. MLE for state-space models can be routinely
carried out in a missing-data framework using the EM algorithm, as
first proposed in the 1980s by \cite{Shumway1982} and
\cite{Ghosh1989}; a detailed derivation can also be found in
\cite{Ghahramani1996}. In the context of pairs trading,
\cite{Elliott2005} reports some simulation and calibration studies
demonstrating that the EM algorithm provides a consistent and
robust estimation procedure for the model (\ref{ss1})-(\ref{ss2}),
and suggest that the finite-dimensional recursive filter described
in \cite{Elliott1999} may also be used for estimation (although no
results were provided).

\cite{Elliott2005} suggest to base model estimation on data points
belonging to a look-back window of size $N$. A full iterative
calibration procedure is then run till convergence every time a
new data point is observed and the window has been shifted
one-step ahead. This approach implicitly requires the analyst to
select a value of $N$ (the effective sample size) ensuring that, within each
time window, the model parameters do not vary. The selection of
$N$ may be difficult without a proper model selection procedure in
place to test the assumption that the model is locally
appropriate. For instance, although a small value of $N$ may
guarantee adequacy of the model, it could also lead to notable
biases in the parameter estimates. When $N$ is too large, a number
of factors such as special market events, persisting pricing
inefficiencies or structural price changes may invalidate the
modeling assumptions. Clearly, the question of how much
history to use to calibrate a model and the corresponding trading
strategy is a critical one.

From a practical point of view, repeating the EM algorithm several
times over different window sizes in search for an optimal window
size may be computationally expensive. Even performing a single
calibration run may not be fast enough to accommodate very
aggressive trading strategies in high-frequency domains, due to the
well-known slow convergence properties of the EM algorithm. More
notably, a vanilla application of this algorithm does not
automatically provide any measure of parameter uncertainty.
Although various methods and modifications have been proposed in
the statistical literature in this direction (see, for instance,
\cite{McLachlan1997}), the resulting methods usually introduce
further computational complexity.

In order to cope with these limitations, in this section we
present and discuss our three main contributions. Firstly, we
introduce more flexibility and release some of the modeling
assumptions by allowing the model parameters to vary over time; in
this way, both smooth and sudden changes in the data generating
process (such as those created by structural price changes and
unusual persistence of market inefficiencies) will be more easily
accommodated. Secondly, we propose a practical on-line estimation
procedure that, being non-iterative, can be run efficiently over
time, even at high sampling frequencies, and does not inflict the
burden of frequent re-calibration and window size selection.
Ideally, a model should be able to \emph{adapt} to changes in the
data generating mechanism with minimal user intervention, and
should be amenable to on-line monitoring so that the key
parameters characterizing the underlying mean-reverting property
can always be under continuous scrutiny. These features enable the
trader (or trading system) to take fast dynamic decisions.
Thirdly, as a result of the Bayesian framework proposed here
for recursive estimation, measures of uncertainty extracted from
the full posterior distribution can be routinely computed at no
extra cost. These measures can be very informative in quantifying
and assessing estimation errors, and can potentially be exploited
to derive more robust trading strategies; see Section
\ref{results} for some practical examples.

\subsection{The proposed model} \label{model}

In this section we initially propose a variation of the classic
state-space model used by \cite{Elliott2005} in which the
parameters are not assumed to be constant over time. This
modification will then force us to reconsider under which
conditions the spread process is mean-reverting.

First, let us rearrange the model (\ref{ss1}) and (\ref{ss2}) in
an autoregressive (AR) form. From \eqref{ss2}, note that
$x_t=y_t-\omega_t$. Then, from substitution in (\ref{ss1}) for
$t\geq 2$, we obtain
\begin{align} \label{ar}
y_t & = A+By_{t-1}+\epsilon_t
\end{align}
where $\epsilon_t=\omega_t-B\omega_{t-1}+\varepsilon_t$ is
distributed as a $N(0,\sigma^2)$, for $\sigma^2=D^2+B^2D^2+C^2$. The
above model is an AR model of order 1 with parameters $A$, $B$ and
$\sigma^2$.

We achieve time-dependence in the parameters of (\ref{ar}) by
replacing $A$ and $B$ with $A_t$ and $B_t$, respectively, and
postulating that both parameters evolve over time, according to
some weakly stationary process. Here we consider the case of $A_t$
and $B_t$ changing over time via AR models, but more general time
series may be considered. These choices lead to the specification
of a time-varying AR model of order 1, or TVAR(1). Accordingly,
the observed spread is described by the following law,
\begin{gather}
y_t =A_t+B_ty_{t-1}+\epsilon_t \label{tvar} \\  A_t=\phi_1
A_{t-1}+\nu_{1t}, \quad B_t=\phi_2B_{t-1}+\nu_{2t} \nonumber
\end{gather}
where $\phi_1$ and $\phi_2$ are the AR coefficients, usually being
assumed to lie inside the unit circle so that $A_t$ and $B_t$ may
be weakly stationary processes.

Setting $\theta_t=(A_t,B_t)'$ and
$F_t=(1,y_{t-1})'$, the model can be expressed in state space
form,
\begin{align}
y_t & =F_t'\theta_t+\epsilon_t \label{model1} \\  \theta_t &
=\Phi\theta_{t-1}+\nu_t\label{model1evol}
\end{align}
with $\Phi=\textrm{diag}(\phi_1,\phi_2)$ and error structure
governed by the observation error $\epsilon_t\sim N(0,\sigma^2)$ and the evolution error vector
$\nu_t=(\nu_{1t},\nu_{2t})'\sim N_2(0,\sigma^2V_t)$, where
$N_2(\cdot,\cdot)$ denotes the bivariate Gaussian distribution. It
is assumed that the innovation series $\{\epsilon_t\}$ and
$\{\nu_t\}$ are individually and mutually uncorrelated and they
are also uncorrelated of the initial state vector $\theta_1$, i.e.
$E(\epsilon_t\epsilon_s)=0;E(\nu_t\nu_s')=0;E(\epsilon_t\nu_u)=0;E(\epsilon_t\theta_1)=0;
E(\nu_t\theta_1')=0$, for any $t\neq s$, where $E(.)$ denotes
expectation and $\theta_1'$ denotes the row vector of $\theta_1$.

With the inclusion of a time component in the parameters $A$ and
$B$, we now need to revise the conditions under which the mean reversion
property holds true. The next
result gives sufficient conditions for the spread $\{y_t\}$ to be mean-reverting.
\begin{thm}\label{th1}
If $\{y_t\}$ is generated from model (\ref{model1})-(\ref{model1evol}), then, conditionally on a realized sequence
$B_1,\ldots,B_t$, $\{y_t\}$ is mean reverting if one
of the two conditions apply:
\begin{enumerate}
\item [(a)] $\phi_1=\phi_2=1$, $V_t=0$ and $|B_1|<1$; \item [(b)] $\phi_1$ and
$\phi_2$ lie inside the unit circle, $V_t$ is bounded and
$|B_t|<1$, for all $t\geq t_0$ and for some integer $t_0>0$.
\end{enumerate}
\end{thm}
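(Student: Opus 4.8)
The plan is to treat the realized sequence $B_1,\ldots,B_t$ as deterministic and study the conditional first two moments of $y_t$, calling $\{y_t\}$ mean-reverting (in the sense established in Section~\ref{mean_reversion}) precisely when $E(y_t\mid B_{1:t})$ and $\var(y_t\mid B_{1:t})$ stay bounded and settle to finite limits as $t\to\infty$. First I would unfold the recursion (\ref{tvar}) into the closed form
\begin{equation*}
y_t=\Big(\prod_{i=2}^{t}B_i\Big)y_1+\sum_{k=2}^{t}\Big(\prod_{i=k+1}^{t}B_i\Big)(A_k+\epsilon_k),
\end{equation*}
and in parallel solve the evolution as $A_k=\phi_1^{k-1}A_1+\sum_{j=2}^{k}\phi_1^{k-j}\nu_{1j}$. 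Writing $\Pi_{k,t}=\prod_{i=k+1}^{t}B_i$ (with $\Pi_{t,t}=1$), this reduces the whole problem to controlling weighted sums of the deterministic products $\Pi_{k,t}$.

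Case (a) I would dispatch immediately. When $V_t=0$ the evolution errors vanish almost surely, and $\phi_1=\phi_2=1$ freezes the parameters at $A_t\equiv A_1$ and $B_t\equiv B_1$, so the model collapses to the time-invariant AR(1) of Section~\ref{mean_reversion} with coefficient $B_1$; the hypothesis $|B_1|<1$ is exactly the condition $|1-b|<1$ already shown there to force convergence of both moments.

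For case (b) the substance is the second-moment bound. Taking the conditional expectation annihilates the $\epsilon_k$ and leaves $E(y_t\mid B_{1:t})=\Pi_{1,t}E(y_1)+E(A_1)\sum_{k=2}^{t}\Pi_{k,t}\phi_1^{k-1}$; since $|\phi_1|<1$ this sum is dominated by a convergent geometric series, and because each $\Pi_{k,t}\to0$ it tends to a finite limit. For the variance I would exploit the assumed uncorrelatedness ($E(\epsilon_t\nu_u)=0$ and $\epsilon_t$ uncorrelated with $\theta_1$) to decouple the observation noise from the $A$-process, so that $\var(y_t\mid B_{1:t})$ separates, up to initial-state cross terms controlled by $\Pi_{1,t}$, into the innovation contribution $\sigma^2\sum_{k}\Pi_{k,t}^2$ and the autoregressive contribution $\sum_{k,l}\Pi_{k,t}\Pi_{l,t}\cov(A_k,A_l)$, the latter bounded by $\sup_k\var(A_k)\sum_{k,l}|\Pi_{k,t}\Pi_{l,t}|\,|\phi_1|^{|k-l|}$ with $\sup_k\var(A_k)<\infty$ because $|\phi_1|<1$ and $V_t$ is bounded.

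The hard part will be showing that $\sum_{k}\Pi_{k,t}^2$ and the analogous double sum stay bounded uniformly in $t$, since the pointwise bound $|B_t|<1$ alone does not force the products $\prod_i|B_i|$ to decay: if $|B_i|\uparrow1$ rapidly the products can remain bounded away from $0$ and the variance can diverge. I would close this gap by invoking the weak stationarity of $\{B_t\}$, namely that with $|\phi_2|<1$ and bounded $V_t$ a typical sample path satisfies $t^{-1}\sum_{i}\log|B_i|\to E\log|B_t|<0$, whence $\Pi_{k,t}$ decays geometrically and all three series converge. This is the step I expect to demand the most care, and where the precise reading of "$|B_t|<1$ for all $t\ge t_0$" (pointwise versus uniformly bounded away from $1$) genuinely matters.
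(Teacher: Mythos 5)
Your proof follows essentially the same route as the paper's. Case (a) is dispatched identically, by collapsing to the static AR(1) of Section \ref{mean_reversion}. For case (b) the paper likewise unrolls the recursion into products of the $B_i$'s, reduces the conditional mean to $y_1\prod_{i=2}^{t}B_i$ plus a $\phi_1$-weighted sum, and bounds the conditional variance by geometric-type series, using $|\phi_1|<1$ and the boundedness of $V_t$ to control $\var(A_k)$ and $\cov(A_k,A_l)$ exactly as you do. (You are in fact slightly more careful on one point: the paper's variance expansion keeps only the covariances of $A_t$ with the lagged $A$'s, whereas you retain the full double sum $\sum_{k,l}\Pi_{k,t}\Pi_{l,t}\cov(A_k,A_l)$.) Where the two arguments part company is precisely the step you flag as the hard one. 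The paper closes the convergence of $\sum_{j}\prod_{i=0}^{j}B_{t-i}^2$ by asserting that ``given $|B_t|<1$ we can find some $B$ with $|B_t|<|B|<1$'', i.e., it silently reads the hypothesis as a uniform bound $\sup_{t\geq t_0}|B_t|\leq B<1$ and then dominates everything by the geometric series $\sum_j |B|^{j+1}$. Under that uniform reading the argument is elementary and complete; under the literal pointwise reading it is not, for exactly the reason you give ($|B_t|\uparrow 1$ can keep the products bounded away from zero and let the variance diverge).

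Your proposed repair --- ergodicity of $\log|B_t|$ giving $t^{-1}\sum_i\log|B_i|\to E\log|B_t|<0$ --- is a genuinely different ingredient, but it does not close the gap as stated. First, the theorem is conditional on an arbitrary realized sequence satisfying $|B_t|<1$ for $t\geq t_0$; an almost-sure statement about typical unconditional sample paths proves a different, differently quantified claim, and conditioning on the event $\{|B_t|<1,\ t\geq t_0\}$ changes the law of the path so the unconditional ergodic limit does not transfer directly. Second, $E\log|B_t|<0$ is not implied by $|\phi_2|<1$ and bounded $V_t$: for a stationary Gaussian AR(1) the sign of $E\log|B_t|$ depends on the stationary variance and can be positive. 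The cleaner fix is simply to adopt the paper's implicit reading --- strengthen the hypothesis to $\sup_{t\geq t_0}|B_t|\leq B<1$ --- after which your own geometric-series bounds for the mean, the innovation sum $\sigma^2\sum_k\Pi_{k,t}^2$, and the $A$-covariance double sum go through verbatim.
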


Some comments are in order. First we note that if $A_t=A$ and
$B_t=B$ (this is achieved by setting $\phi_1=\phi_2=1$, and by
forcing the covariance matrix of $\nu_t$ to be zero for all $t$),
the condition $|B_1|=|B|<1$ of Theorem \ref{th1} reduces to the
known condition of mean reversion for the static AR model, as in
the previous section. In the dynamic case, when $A_t$ and/or $B_t$
change over time, the condition $|B_t|<1$ enables us to check mean
reversion in an on-line fashion. Following the approach of \cite{Elliott2005} for the AR model, we use model (\ref{model1})
in order to obtain estimates $\hat{B}_t$ of $B_t$ and then we check $|\hat{B}_t|<1$ in order to declare whether $\{y_t\}$ is mean reverting or not; in the following sections we detail the computations involved in the estimation of $B_t$. Structural changes in the level
of $y_t$ are accounted through estimates of $A_t$, but these do
not affect the mean reversion of $\{y_t\}$ as $A_t$ controls only
the level of $y_t$. For the case of $A_t=A$, this is explained in
some detail in \cite{Elliott2005} and for more information on
structural changes for cointegrated systems the reader is referred
to \cite{Johansen1988} and \citet[Chapter 6]{Lutkepohl2006}.
The following result is a useful corollary of Theorem \ref{th1}.

\begin{corl}\label{cor1}
If $\{y_t\}$ is generated from model (\ref{model1})-(\ref{model1evol}) with
$\phi_1=1$, $|\phi_2|<1$, $V_{1t}=V_{12,t}=0$, then $\{y_t\}$ is
mean reverting if $|B_t|<1$, for all $t\geq t_0$, for
some $t_0>0$, where $V_t=(V_{ij,t})$.
\end{corl}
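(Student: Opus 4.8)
The plan is to obtain the corollary by specializing the argument behind Theorem~\ref{th1}. The crucial first step is to observe that the hypotheses $\phi_1=1$ and $V_{1t}=V_{12,t}=0$ force the level parameter to be constant: the variance of $\nu_{1t}$ equals $\sigma^2V_{1t}=0$, so $\nu_{1t}=0$ almost surely, and the evolution $A_t=\phi_1A_{t-1}+\nu_{1t}=A_{t-1}$ then yields $A_t\equiv A_1=:A$ for every $t$. Conditioning on the realised sequence $B_1,\ldots,B_t$ and treating these as fixed constants, the observation equation \eqref{model1} therefore collapses to the time-varying-coefficient recursion $y_t=A+B_ty_{t-1}+\epsilon_t$. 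This is precisely the setting handled in the proof of Theorem~\ref{th1} with a constant intercept, so the corollary should follow by re-running that computation; note that the evolution variance $V_{22,t}$ of $B_t$ never enters, because we condition on the realised $B$ path.

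Next I would solve the recursion explicitly by back-substitution down to the initial state,
\[
y_t=\Big(\prod_{j=2}^{t}B_j\Big)y_1+\sum_{k=2}^{t}\Big(\prod_{j=k+1}^{t}B_j\Big)\big(A+\epsilon_k\big),
\]
with the convention that an empty product equals one. Taking the conditional mean and variance and using the mutual independence of the $\epsilon_k$ together with their independence from $y_1$ gives
\[
E(y_t)=\Big(\prod_{j=2}^{t}B_j\Big)E(y_1)+A\sum_{k=2}^{t}\prod_{j=k+1}^{t}B_j,
\]
\[
\var(y_t)=\Big(\prod_{j=2}^{t}B_j\Big)^2\var(y_1)+\sigma^2\sum_{k=2}^{t}\Big(\prod_{j=k+1}^{t}B_j\Big)^2 .
\]
It then remains to show that both expressions have finite limits as $t\to\infty$, which is the content of mean reversion as defined in Section~\ref{mean_reversion}.

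For the limits I would set $\rho=\sup_{t\ge t_0}|B_t|$ and argue that $\rho<1$. The initial-condition weight then satisfies $\big|\prod_{j=2}^{t}B_j\big|\le M\,\rho^{\,t-t_0+1}$ for $t>t_0$, where $M=\big|\prod_{j=2}^{t_0-1}B_j\big|$ is a fixed constant, so it tends to $0$ and the dependence on both $E(y_1)$ and $\var(y_1)$ is washed out. For the driving terms I would split each sum at $t_0$: the tail $k\ge t_0$ is dominated by the convergent geometric series $\sum_{m\ge0}\rho^{2m}=1/(1-\rho^2)$ and its first-power analogue, while the finitely many terms with $k<t_0$ each carry the vanishing factor $\rho^{\,t-t_0+1}$. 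This yields finite limiting values for $E(y_t)$ and $\var(y_t)$, establishing that $\{y_t\}$ is mean reverting.

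The main obstacle is precisely the step $\rho<1$. The hypothesis only supplies the pointwise bounds $|B_t|<1$ for $t\ge t_0$, which on their own do not preclude $|B_t|\to1$ and a non-decaying product. This is the one place where assuming $|\phi_2|<1$, rather than merely $|B_t|<1$, does genuine work: weak stationarity of $\{B_t\}$, guaranteed by $|\phi_2|<1$, is what keeps the realised paths bounded away from the unit circle and legitimises replacing the pointwise inequalities by a uniform $\rho<1$. I would make this control quantitative by the same device used for Theorem~\ref{th1}(b), so that the corollary emerges by combining the constant-level reduction of part~(a) with the bounded-coefficient estimate of part~(b).
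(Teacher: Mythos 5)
Your overall route is the same as the paper's: the authors dispose of Corollary~\ref{cor1} by ``combining the proofs of (a) and (b) of Theorem~\ref{th1}'', which is exactly your reduction --- $\phi_1=1$ together with $V_{1t}=V_{12,t}=0$ makes $\nu_{1t}=0$ and hence $A_t\equiv A_1$ deterministic and constant (the part-(a) mechanism applied to the intercept alone), after which the back-substitution, the conditional mean $E(y_t\mid B^t)=y_1\prod_i B_i+A\sum_k\prod_i B_i$ and the variance bound are handled as in part (b). Up to that point your write-up is, if anything, more explicit than the paper's (you also correctly keep the $A\sum_k\prod_i B_i$ term, whereas the proof of part (b) uses $E(A_t)=0$, which does not hold here).

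The problem is your closing paragraph. The claim that $|\phi_2|<1$ ``keeps the realised paths bounded away from the unit circle'' and therefore delivers $\rho=\sup_{t\ge t_0}|B_t|<1$ is false: a weakly stationary Gaussian AR(1) has unbounded marginal support, and conditioning on the event that $|B_t|<1$ for all $t\ge t_0$ does not prevent $|B_t|\to 1$ along the realised path. Moreover the gap you spotted is a real one, not a technicality: take $B_t=1-1/t$, which satisfies $|B_t|<1$ pointwise; then $\prod_{i=0}^{j}B_{t-i}^2=\bigl((t-j-1)/t\bigr)^2$ and $\sum_{j=0}^{t-3}\prod_{i=0}^{j}B_{t-i}^2=t^{-2}\sum_{m=2}^{t-1}m^2\sim t/3$, so the conditional variance of $y_t$ diverges and mean reversion fails. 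Hence a uniform bound $\sup_{t\ge t_0}|B_t|\le\rho<1$ (or some summability condition on $1-|B_t|$) is genuinely needed and does not follow from the stated hypotheses via $|\phi_2|<1$. In fairness, the paper's own proof of Theorem~\ref{th1}(b) makes exactly the same unjustified leap (``given $|B_t|<1$, we can find some $B$ so that $|B_t|<|B|<1$''), so you have not introduced a gap that the paper avoids; but your proposed repair does not close it, and the honest fix is to strengthen the hypothesis to a uniform bound on the realised $|B_t|$ rather than to appeal to weak stationarity.
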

The proof of this corollary follows by combining the proofs of (a)
and (b) of Theorem \ref{th1} (see the appendix). Corollary
\ref{cor1} gives an important case, in which $A_t=A$, for all $t$ as
in \cite{Elliott2005}, but $B_t$ changes according to a weakly
stationary AR model. This can be used when it is expected that $A_t$
will be approximately constant and benefit may be gained by reducing
the tuning of the four parameters $\phi_1,\phi_2,\delta_1,\delta_2$
to tuning of two parameters $\phi_2,\delta_2$. For a further
discussion on this topic see Sections \ref{parameters} and
\ref{results}.

In this paper we propose (\ref{tvar}) as a flexible time-varying model for the observed spread. However,
more general time-varying autoregressive models may be used. Consider that $y_t$ is
generated from a time-varying AR model of order $d$, i.e.
\begin{equation}\label{bigmodel1}
y_t=A_t+\sum_{i=1}^d B_{it} y_{t-i}+\epsilon_t, \quad t\geq d+1
\end{equation}
and the time-varying AR parameters $A_t$ and $B_{it}$ follow first
order AR models, as
$$
A_t=\phi_1 A_{t-1}+\nu_{1t}, \quad B_{it} = \phi_{i+1}
B_{i,t-1}+\nu_{i+1,t}
$$
where $d$ is the order or lag of the autoregression, the innovations
$\epsilon_t$ and $\nu_{it}$ are individually and mutually
uncorrelated and they are uncorrelated with the initial states $A_d$
and $B_{id}$. Certain Gaussian distributions may be assumed on
$\epsilon_t$ and $\nu_{it}$ and on the states $A_t$ and $B_{it}$.
It is readily seen that this model can be casted in state space
form (\ref{model1}) with $F_t=(1,y_{t-1},\ldots,y_{t-d})'$,
$\theta_t=(A_t,B_{1t},\ldots,B_{dt})'$ and
$\Phi=\textrm{diag}(\phi_1,\ldots,\phi_{d+1})$ (the diagonal matrix
with diagonal elements $\phi_1,\ldots,\phi_{d+1}$). It is clear that
model (\ref{tvar}) is a special case of model (\ref{bigmodel1}) with
$d=1$. When the general model is adopted, the conditions of mean reversion of
$\{y_t\}$ of Theorem \ref{th1} need to be revised, as follows. For $\phi_i$
$(i=1,\ldots,d+1)$ being inside the unit circle, for $t>t_0$, all
(time-dependent) solutions of the autoregressive polynomial
$\psi(x)=1-\sum_{i=1}^d B_{it}x^i$ must lie outside the unit circle.
This effectively means that after some $t_0$, $\{y_t\}$ is locally
stationary \citep{Dahlhaus97}. For the remainder of this paper, we consider
the situation of $d=1$, i.e. model (\ref{tvar}), as this is a simple and parsimonious model.

\subsection{A Bayesian framework} \label{bayesian}

We adopt a Bayesian formulation that, within
the realm of conjugate analysis, allows us to derive fast recursive
estimation procedures and naturally compute measures of uncertainty.
The analysis we propose in this section has roots in the work of
\cite{West99,Prado02}, and \cite{trianta07}. Initially, we assume
that, given the observational variance $\sigma^2$, the initial state
$\theta_1$ follows a bivariate Gaussian distribution with mean
vector $m_1$ and covariance matrix $\sigma^2P_1$. Also, we place an
inverted gamma density prior with parameters $n_1/2$ and $d_1/2$ on
$\sigma^2$. In summary, the prior structure is specified as follows
\begin{equation}
\theta_1|\sigma^2 \sim N_2(m_1,\sigma^2 P_1) \quad \textrm{and}
\quad \sigma^2\sim IG(n_1/2,d_1/2),\label{priors1}
\end{equation}
where $m_1,P_1,n_1,d_1$ are assumed known; we comment on their
specification in Section \ref{parameters}. Note that,
unconditionally of $\sigma^2$, the initial state $\theta_1$ follows
a Student $t$ distribution.

With these priors in place, the posterior distribution of
$\theta_t|\sigma^2$ and the predictive distribution of
$y_t|\sigma^2$ are routinely obtained by the Kalman filter. We
elaborate more on this as follows. First, assume that at time $t-1$
the posteriors are given by $\theta_{t-1}|\sigma^2,y^{t-1}\sim
N_2(m_{t-1},\sigma^2P_{t-1})$ and $\sigma^2|y^{t-1}\sim
IG(n_{t-1}/2,d_{t-1}/2)$, for some $m_{t-1}$, $P_{t-1}$, $n_{t-1}$
and $d_{t-1}$. Here the notation $y^t$ means that all data points observed up to time $t$ are included.
Then, writing the likelihood function (or evidence) for
an observation $y_t$ as $p(y_t|\theta_t,\sigma^2)$, an application of the Bayes theorem gives
$$
p(\theta_t|\sigma^2,y^t) = \frac{ p(y_t|\theta_t,\sigma^2)
p(\theta_t|\sigma^2,y^{t-1}) }{ p(y_t|\sigma^2,y^{t-1}) },
$$
It follows that the posterior density of $\theta_t|\sigma^2$ is Gaussian, and specifically
$$
\theta_t|\sigma^2,y^t \sim N_2(m_t,\sigma^2P_t).$$
The recurrence equations for updating $m_t$ and $P_t$ are provided in Section \ref{algorithm}. The probability density
$p(y_t|\sigma^2,y^{t-1})$ refers to the one-step ahead forecast density,
which is obtained from the prior $p(\theta_t|\sigma^2,y^{t-1})$ as
$y_t|\sigma^2,y^{t-1}\sim N(f_t,\sigma^2Q_t)$. Again, see Section \ref{algorithm} below for recursive equations needed to update $f_t$ and $Q_t$.

The posterior distribution of $\sigma^2$ is also obtained by an
application of the Bayes theorem,
$$
p(\sigma^2|y^t)=\frac{ p(y_t|\sigma^2,y^{t-1}) p(\sigma^2|y^{t-1})
}{ p(y_t|y^{t-1}) }.
$$
which gives an inverted gamma density $\sigma^2|y^t\sim
IG(n_t/2,d_t/2)$, depending on parameters $n_t$
and $d_t$. Here $y_t|y^{t-1}$ follows a $t$ distribution with
$n_{t-1}$ degrees of freedom $y_t|y^{t-1} \sim
t(n_{t-1},f_t,Q_tS_{t-1})$, with $S_{t-1}=d_{t-1}/n_{t-1}$.

From the density $p(\theta_t|\sigma^2,y^t)$, the posterior
distribution of $\theta_t$, unconditionally of $\sigma^2$, is easily
obtained by integrating $\sigma^2$ out; it It then follows that
$\theta_t|y^t\sim t_2(n_t,m_t,P_tS_t)$. From this the $(1-\gamma)\%$
marginal confidence interval of $B_t$ is
$$
m_{2t}\pm t_{\gamma/2}\sqrt{P_{22,t}S_t}
$$
where $m_t=(m_{1t},m_{2t})'$, $P_t=(P_{ij,t})_{ij=1,2}$ and
$t_{\gamma}$ denotes the $100\gamma\%$ quantile of the standard $t$
distribution with $n_t$ degrees of freedom. The $(1-\gamma)\%$
confidence interval for $x_{t+1}$ is
$$
f_{t+1}\pm t_{\gamma/2}\sqrt{F_{t+1}'R_{t+1}F_{t+1}S_t}
$$
and the $(1-\gamma)\%$ confidence interval for $y_{t+1}$ is
$$
f_{t+1}\pm t_{\gamma/2}\sqrt{Q_{t+1}S_t}
$$
where the recurrence relationships of $R_{t+1}$ and $Q_{t+1}$ are
given below.

Some references on related time series models are in order. From a frequentist perspective, time varying AR models have been discussed in \cite{Dahlhaus97, Francq01, Francq04} and \cite{Anderson05}.
Among other works, recursive estimation of time varying autoregressive processes in a
nonparametric setting is discussed in \cite{Moulines2005} and, for
non Gaussian processes, in \cite{Djuric2002}, using particle
filters. Standard Bayesian AR models have been developed since the
early 70's, see e.g. \cite{Zellner72,Monahan83,Kadiyala97} and \cite{Ni03}. Free software for model estimation is widely available\footnote{Time-varying AR models
are implemented in the computing language R (website: {\tt http://cran.r-project.org/}) via the contributed package {\tt timsac}.
S-plus, Fortran and Matlab routines for the implementation of these models can be downloaded from the website of Mike West ({\tt http://www.stat.duke.edu/research/software/west/tvar.html})}.

\section{On-line estimation} \label{algorithm}

\subsection{An adaptive and recursive algorithm using discount factors}

In this section we provide the updating equations needed to
compute the posterior densities of $\theta_t|y^t$ and of
$\sigma^2|y^t$ at each time step. Starting at time $t=1$ with a
quadruple of initial values ($m_1,P_1,n_1,d_1$), the calibration
algorithm then proceeds as follows:
\begin{gather}
R_t  =\Phi P_{t-1} \Phi+V_t, \quad Q_t  =F_t'R_tF_t+1, \quad e_t
 =y_t-F_t'\Phi m_{t-1} \nonumber \\ K_t  =R_tF_t/Q_t, \quad
m_t  =\Phi m_{t-1}+K_te_t, \quad P_t =R_t-K_tK_t'Q_t \label{algo} \\
r_t =y_t-F_t'm_t , \quad n_t =n_{t-1}+1, \quad d_t =d_{t-1}+r_te_t,
\quad S_t  = \frac{d_t}{n_t} \nonumber
\end{gather}
For any $t=2,\ldots,T$, the above algorithm estimates the target
posterior quantities of interest; for instance, we can extract
posterior and predictive mean and variances, as well as relevant
quantiles and credible bounds of $\theta_t$ and $\sigma^2$. From
$\theta_t=(A_t,B_t)'$ and the posterior distribution of
$\theta_t|y^t$, we can extract the posterior distribution of
$B_t|y^t$. The condition for mean-reversion established in Theorem
\ref{th1} can be monitored recursively by extracting the posterior
mean of $B_t|y^t$, say $\hat B_t$, and assessing whether $|\hat
B_t|$ is strictly less than one. Credible bounds can also be
associated to the posterior mean in order to better assess the
possibility that the process is still mean-reverting -- see the
examples in Section \ref{results}.

The full specification of algorithm \eqref{algo} requires the
selection of a covariance matrix $V_t$, which is responsible for the
stochastic evolution of the signal $\theta_t$ and hence the
stochastic change of $A_t$ and $B_t$. Following \citet[Chapter 6]{WH97} we
advocate a practical and convenient analytical solution which allows
us to learn this variance component directly from the data in a
sequential way by means of two \emph{discount factors}, $\delta_1$
and $\delta_2$; this is referred to as \emph{component discounting}.
The idea is that by assuming $P_1$ and $V_t$ to be diagonal matrices
we can use the two discount factors to discount the precision of the
updating of the mean and the variance of $\theta_t$ as we move from
time $t-1$ to $t$. In other words we use $\delta_1$ and $\delta_2$
to specify the covariance matrix $V_t$ as
$$
V_t=\left(\begin{array}{cc} \delta_1^{-1}(1-\delta_1) \phi_1^2
p_{11,t-1} & 0 \\ 0 & \delta_2^{-1}(1-\delta_2) \phi_2^2 p_{22,t-1}
\end{array} \right)
$$
where $P_t=(p_{ij})_{i,j=1,2}$. This implies that
$R_t=\textrm{diag}(\phi_1^2 p_{11,t-1}/\delta_1,\phi_2^2
p_{22,t-1}/\delta_2)$ and thus, as we move from $t-1$ to $t$, the
prior variance of $A_t$ is increased by a factor of $1/\delta_1$ and
of $B_t$ by a factor of $1/\delta_2$. Of course if
$\delta_1=\delta_2=1$, then $V_t=0$ and in this case $\theta_t$
carries no stochastic evolution. If we allow $\delta_1=1$ and
$\delta_2<1$, then only $B_t$ has stochastic evolution over time.

\subsection{Model comparison and model
assessment}\label{diagnostics}

The performance of the estimation procedure of Sections
\ref{bayesian} and \ref{algorithm} can be formally evaluated using
model diagnostic and model comparison tools; see, for instance, \cite{Li03}
for a general exposition of time series diagnostics and
\cite{Harrison91} for diagnostics in state space models. In this section we
briefly discuss three diagnostic tools, namely the mean of the
squared standardized forecast errors (MSSE), the likelihood
function, and sequential Bayes factors.

From the Student $t$ distribution of $y_t|y^{t-1}$, i.e.
$y_t|y^{t-1}\sim t(n_{t-1},f_t,Q_tS_{t-1})$, we can define the
standardized one-step forecast errors (or standardized residuals)
as $u_t=Q_t^{-1}S_{t-1}^{-1}(y_t-f_t)$, so that $u_t|y^{t-1}\sim
t(n_{t-1},0,1)$ (the standard $t$ distribution with $n_{t-1}$
degrees of freedom). We can therefore construct diagnostics and
outlier detection tools based on the above $t$ distribution of
$u_t$. Writing $v_t=(1-2n_{t-1}^{-1})u_t$ we have
$E(v_t^2|y^{t-1})=1$ and so the MSSE is defined as
$(T-1)^{-1}\sum_{t=2}^T v_t^2$, which if the model fit is good,
should be close to 1.

From the Student $t$ distribution of $y_t|y^{t-1}$ the
log-likelihood function of $\phi_1,\phi_2,\delta_1,\delta_2$ based
on data $y^T=\{y_2,\ldots,y_T\}$ is
\begin{eqnarray*}
\ell(\phi_1,\phi_2,\delta_1,\delta_2;y^T) &=& \sum_{t=2}^T
p(y_t|y^{t-1}) \nonumber \\ &=& \sum_{t=2}^T \log
\frac{\Gamma(n_t/2)}{\sqrt{\pi n_{t-1}} \Gamma(n_{t-1}/2)} -
\frac{1}{2} \sum_{t=2}^T n_t \log \left\{
1+\frac{(y_t-f_t)^2}{n_{t-1}Q_tS_{t-1}}\right\}\label{LogL}
\end{eqnarray*}
where $\Gamma(.)$ denotes the gamma function. Model camparison can be carried out by using either one of the
following criteria: likelihood function, Akaike's information
criterion (AIC) and Bayesian information criterion (BIC)	. In particular, we can choose optimal values of
some or all of the hyperparameters
$\phi_1,\phi_2,\delta_1,\delta_2$ by maximizing $\ell(.)$. A
discussion on the specification of the hyperparameters of the
model can be found in Section \ref{parameters}.

For the application of the above diagnostic criteria, all data $y^T$ is needed to be available, or historical data can be used.
However, sometimes it is useful to construct sequential
diagnostics so that the model can be assessed and updated over
time in an adaptive way. Such diagnostics tools include sequential
likelihood ratios and sequential Bayes factors. Here we briefly
discuss the latter, the foundations of which are discussed in
detail in \citet[Chapter 11]{WH97}. Suppose that, given a sample
$y^T=\{y_1,\ldots,y_T\}$ we have two candidate models of the form
of (\ref{model1}) that is they have the same structural form, but
they may differ in the values of $\phi_1$, $\phi_2$, $\delta_1$
and $\delta_2$. Suppose that we denote the two models by
$\mathcal{M}_1$ and $\mathcal{M}_2$ and for $i=1,2$ we write
$\phi_{i1}$, $\phi_{i2}$, $\delta_{i1}$ and $\delta_{i2}$ to
indicate the dependence of model $\mathcal{M}_i$ in these
parameters. Then the Bayes factor of $\mathcal{M}_1$ versus
$\mathcal{M}_2$ is given by the ratio of their respective one-step
forecast densities, i.e.
$$
H_t=\frac{ p(y_t|y^{t-1},\mathcal{M}_1) }{
p(y_t|y^{t-1},\mathcal{M}_2) } = \left( \frac{
n_{t-1}Q_{2t}S_{2,t-1}+e_{2t}^2 }{ n_{t-1}Q_{1t}S_{1,t-1}+e_{1t}^2
} \right)^{n_t/2} \left(
\frac{Q_{1t}S_{1,t-1}}{Q_{2t}S_{2,t-1}}\right)^{n_t/2}
$$
where we have used that $y_t|y^{t-1},\mathcal{M}_i\sim
t(n_{t-1},f_{it},Q_{it}S_{i,t-1})$, with the quantities $f_{it}$,
$e_{it}$, $Q_{it}$, $S_{i,t-1}$ being appropriately indexed by
$i=1,2$. Given data $y^T$ one can either judge the performance of
the two models sequentially (by comparing $H_t$ to 1, for $2\leq
t\leq T$) and thus arriving to a sequential monitoring of the two
models, or use the entire data set $y^T$ to compare the models
globally, e.g. one can extract the mean or other features of the
empirical distribution of $\{H_t\}$.

\subsection{Convergence analysis}\label{converge}

Algorithm \eqref{algo} is quite similar to the celebrated Kalman
filter; conditional on $\sigma^2$, the algorithm exactly reduces to
the Kalman filter, but the full algorithm allows for the estimation
of $\sigma^2$ that results in the Student $t$ posterior distribution
for $\theta_t$. On the performance of the Kalman filter,
\cite{Elliott2005} state that the posterior covariance matrix of the
parameters converges to stable values and this has important
implications on the stability of the state process $\{x_t\}$.
Indeed, it is well known that if the parameters of a state space
model are constant, then the posterior covariance matrix of the
states converges to a stable value; see, for instance, \citet[p.
119]{Harvey1989} as well as \cite{Chan84, triantafyllopoulos07}.
However, the performance of the posterior covariance matrix $P_t$
when the components of the model are made time-dependent has not
been investigated; in our system this is conveyed via the
time-varying vector $F_t=(1,y_{t-1})'$. This aspect is important as
instability or divergence of $P_t$ could result in instability of
the estimation of $A_t$ and $B_t$ and hence of $x_t$. The next
result states that, in our system, $P_t$ converges to stable values
and we provide an explicit formula for the computation of the limit
of $P_t$.
\begin{thm}\label{th2}
Suppose that $\{y_t\}$ is generated from model (\ref{model1}). If
$\{y_t\}$ is mean reverting and if, for $j=1,2$, it is
$\delta_j<\phi_j^2$, then as $t\rightarrow\infty$ the limit $P$ of
the covariance matrix $P_t=\var(\theta_t|y^t)$ exists and it is
given by $P=\textrm{diag}(p_{11},p_{22})$, where
$$
p_{ii}=\left\{ \sum_{j=0}^\infty
\left(\frac{\delta_i}{\phi_i^2}\right)^j a_{i,t-j}\right\}^{-1}
$$
with $a_{1,t}=1$ and $a_{2,t}=y_{t-1}^2$.
\end{thm}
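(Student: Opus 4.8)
The plan is to pass to the information (precision) form of the filter and exploit the fact that, under component discounting, the recursion for the diagonal of $P_t^{-1}$ splits into two scalar linear recursions that can be solved in closed form. First I would rewrite the covariance update $P_t=R_t-K_tK_t'Q_t$ of \eqref{algo}. Since $K_t=R_tF_t/Q_t$ and $Q_t=F_t'R_tF_t+1$, this is the rank-one contraction $P_t=R_t-R_tF_t(F_t'R_tF_t+1)^{-1}F_t'R_t$, and the Sherman--Morrison identity delivers the clean relation $P_t^{-1}=R_t^{-1}+F_tF_t'$. Under the component-discounting specification, whenever $P_{t-1}$ is diagonal the matrix $R_t=\Phi P_{t-1}\Phi+V_t$ is diagonal as well, equal to $\textrm{diag}(\phi_1^2p_{11,t-1}/\delta_1,\phi_2^2p_{22,t-1}/\delta_2)$, so that $R_t^{-1}=\textrm{diag}(\delta_1/(\phi_1^2p_{11,t-1}),\delta_2/(\phi_2^2p_{22,t-1}))$. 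Reading off the diagonal of $P_t^{-1}=R_t^{-1}+F_tF_t'$ with $F_t=(1,y_{t-1})'$, and noting $(F_tF_t')_{11}=1=a_{1,t}$ and $(F_tF_t')_{22}=y_{t-1}^2=a_{2,t}$, yields the decoupled scalar recursions
$$ p_{ii,t}^{-1}=\frac{\delta_i}{\phi_i^2}\,p_{ii,t-1}^{-1}+a_{i,t}, \qquad i=1,2. $$

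Next I would solve these first-order linear recursions. Writing $\rho_i=\delta_i/\phi_i^2$ and unrolling from the initial precision $p_{ii,1}^{-1}$ gives
$$ p_{ii,t}^{-1}=\rho_i^{\,t-1}p_{ii,1}^{-1}+\sum_{j=0}^{t-2}\rho_i^{\,j}a_{i,t-j}. $$
The hypothesis $\delta_j<\phi_j^2$ is precisely what forces $0<\rho_i<1$, so the transient term $\rho_i^{\,t-1}p_{ii,1}^{-1}$ vanishes as $t\to\infty$. For $i=1$ the remaining series is geometric, $\sum_{j\ge0}\rho_1^{\,j}=1/(1-\rho_1)$, giving the constant limit $p_{11}=1-\delta_1/\phi_1^2$. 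For $i=2$ the series is $\sum_{j\ge0}\rho_2^{\,j}y_{t-1-j}^2$; here I would invoke mean reversion of $\{y_t\}$ via Theorem~\ref{th1}: a mean-reverting path has stable second-order behaviour, so $y_{t-1-j}^2$ grows sub-geometrically and the geometric weight $\rho_2^{\,j}$ guarantees absolute convergence. Inverting then produces the stated expressions, and since every $P_t$ in the scheme is diagonal the limit is $P=\textrm{diag}(p_{11},p_{22})$.

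I expect two points to need the most care. The first is the reduction to diagonal form: the exact rank-one update would reintroduce an off-diagonal entry $-r_{11,t}r_{22,t}y_{t-1}/Q_t$, so I would state explicitly that component discounting retains only the diagonal blocks (equivalently, keeps $P_t$ and $V_t$ diagonal by construction), which is exactly what makes the two precisions evolve independently. The second, and genuinely analytic, point is the convergence of $\sum_{j\ge0}\rho_2^{\,j}y_{t-1-j}^2$; unlike the $i=1$ series this limit still depends on the recent history of the realized path, so convergence here means that the transient dies out and $p_{22,t}$ settles onto a stable functional of $\{y_{t-1-j}^2\}$. Making the boundedness of $y_{t-1-j}^2$ precise from mean reversion — for instance, that $\limsup_t y_t^2$ grows slower than $\rho_2^{-j}$ along the conditioned sequence — is the crux of the argument.
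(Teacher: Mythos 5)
Your proposal is correct and follows essentially the same route as the paper: both reduce the covariance update to the scalar precision recursion $p_{ii,t}^{-1}=\delta_i\phi_i^{-2}p_{ii,t-1}^{-1}+a_{it}$ (you via the Sherman--Morrison identity, the paper by direct componentwise algebra on $P_t=R_t-K_tK_t'Q_t$), then unroll it and use $\delta_i<\phi_i^2$ together with the boundedness of $y_t^2$ under mean reversion to kill the transient and dominate the remaining series by a convergent geometric one. Your explicit remark that component discounting must be taken to enforce diagonality of $P_t$ (since the exact rank-one update would reintroduce an off-diagonal term through $F_tF_t'$) is a point the paper's proof assumes silently rather than argues, but it does not change the substance of the argument.
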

Some comments are in order. First we note that if
$\phi_1=\phi_2=1$ and $V_t=0$ (we have already seen that this
setting reduces the model to the time-invariant AR model
considered in \cite{Elliott2005}), then the condition
$\delta_j<\phi_j^2$ is satisfied for all values of $\delta_j$,
since $0<\delta_j<1$.

From the mean reversion assumption of $\{y_t\}$, if we write
$y_t\approx \mu$, where $\mu$ denotes the equilibrium mean of the
spread, then we can write the limit covariance matrix $P$ as
$$
P=\left(\begin{array}{cc} \phi_1^{-2} (\phi_1^2-\delta_1) & 0 \\ 0 &
\mu \phi_2^{-2} (\phi_2^2-\delta_2)\end{array}\right)
$$
In the important special case of $\phi_1=\delta_1=1$, for which
$A_t=A$ is time-invariant, we can easily see that
$$
P=\left(\begin{array}{cc} p_{11,1} & 0 \\ 0 & \left\{
\sum_{j=0}^\infty \left(\frac{\delta_2}{\phi_2^2}\right)^j
y_{t-j}^2\right\}^{-1}\end{array}\right)
$$
where $p_{11,1}$ is the prior variance $\var(A)$.

The convergence rate of the limit of Theorem \ref{th2} is geometric,
since after some appropriately large $t_L$, we can write $y_t\approx
\mu$, for all $t>t_L$ and the limit of $P$ depends on a geometric
series.

The above convergence results for $P_t$ are given conditional on the
variance $\sigma^2$. Given data up to time $t$, $\sigma^2$
has a posterior inverted gamma distribution $\sigma^2|y^t\sim
IG(n_t/2,d_t/2)$; hence, as the time index gets larger, the
variance of $\sigma^2$, which is given by
$$
\var(\sigma^2|y^t) = \frac{d_t^2}{(n_t-2)^2(n_t-4)^2} =
\frac{(n_1+t-1)^2S_t^2}{(n_1+t-3)^2(n_1+t-5)} \quad (t>5-n_1)
$$
converges to $0$. Therefore, as $t\rightarrow\infty$, $\sigma^2$
concentrates about its mode $S_t=d_t/n_t$ asymptotically
degenerating.

\subsection{Hyperparameter specification} \label{parameters}

The estimation algorithm \eqref{algo} relies upon the
specification of prior distributions and corresponding starting
values ($m_1,P_1,n_1,d_1$) and values of the model components
$(\phi_1,\phi_2,\delta_1,\delta_2)$, which are selected by the
user. In this brief section, considering
weakly informative priors, we provide some guidance on how to
choose these values. Of course, depending on the specific
application, other specifications may be preferred; for instance,
the analyst may want to include stronger prior beliefs regarding
the spread being traded, see e.g. \cite{Kadane1996}. Nevertheless,
it is important to note that, given a reasonable amount of data,
the sensitivity of the calibration procedure on these initial
specifications becomes negligible, especially over streaming data,
because the initial information is deflated over time. This
phenomenon is discussed in some detail in \cite{Ameen84} and in
\cite{trianta07}. Detailed studies on prior specification for the
estimation of AR models can be found in \cite{Kadiyala97,Ni03} and
in references therein.

The parameter $m_1$ is the prior mean of the hidden state, given the
observational variance, i.e. the mean of $\theta_1|\sigma^2$. A
common choice is to set $m_1$ equal to our prior expectation of
$(A_1,B_1)'$, which may be obtained from the availability of
historical data. In all examples of Section \ref{results} we have
used $m_1=(0,0)'$. This setting together with the vague prior $P_1$ that follows,
communicates a prior assumption of mean reversion, but with a large uncertainty placed
\emph{a priori} on $(A_1,B_1)'$.
The convergence results reported in Theorem
\ref{th2} above, guarantee that the choice of $m_1$ and $P_1$ are
not crucial for accurate estimation and forecasting. The covariance
matrix $P_1$ is chosen to be proportional to the $2\times 2$
identity matrix, i.e. $P_1=p_1I_2$. Here a large value of $p_1$
reflects a weakly informative or defuse prior specification, since
in this case the precision $P_1^{-1}$ gets close to zero. Finally,
values for $n_1$ and $d_1$ need to be provided. It can be noted
that, having placed an inverted gamma prior on $\sigma^2$, the
expected value of the observational variance is given by
$E(\sigma^2)=d_1/(n_1-2)$, for $n_1>2$. Based on this observation, a
sensible choice is to set $n_1=3$ and use the prior expectation of
$\sigma^2$ as a starting value $d_1$. Historical data may be used to
specify $d_1$, but in the examples of Section \ref{results} we have
simply used $d_1=1$.

Proceeding now with the specification of
$\phi_1,\phi_2,\delta_1,\delta_2$ we can optimize these parameters
by maximizing the log-likelihood function, given in Section
\ref{diagnostics}, under the condition that $\delta_i<\phi_i^2$ so
that Theorem \ref{th2} applies. Alternatively, according to
Corollary \ref{cor1} we can set $\phi_1=\delta_1=1$ and optimize
only $\phi_2$ and $\delta_2$. In Section \ref{simulations}, where we
present simulation studies, we use the latter, while in Section
\ref{equity}, where we analyze real data, we use the former (full
optimization of four parameters). We note that the likelihood function or Bayes factors
can be used to compare and optimize models using single discount factors $\delta_1=\delta_2$,
known as \emph{single discounting} \citep{WH97}, and models using two
different discount factors (component or multiple discounting).

\subsection{Pairs trading} \label{trading}

Under the assumption that the observed spread process involving two
tradable assets is mean-reverting, and that the model of Eqs.
\eqref{ss1}-\eqref{ss2} describes well its evolution at discrete
observational times $t=t_1,\ldots,t_N$, with $N$ sufficiently small,
a simple pairs trading strategy immediately follows
\citep{Elliott2005}. Let us assume that $\hat x_t$ denotes our best
estimate of the hidden state, which is obtained by calibrating the
model on data collected in the above data window.

At each time $t$, if the observed spread $y_t$ is strictly greater
than the true state $\hat x_t$, then a sensible decision would be to
take a long position in this portfolio, with the intention of
closing this position at a later time, when the spread has reverted
back to its mean. Conversely, if $y_t < \hat x_t$, the trader may
decide to take a short position in the portfolio; this bet is
expected to be a profitable one as soon as the spread process
corrects itself again. Realistic implementations of this popular
strategy may ask for additional layers of sophistication which in
turn require the trader to face a few practical questions; some
examples are:

\begin{itemize}
 \item How can transaction costs be included in this simple model?  In other words,
 when is a trade expected to be profitable, so that an `entry' signal can be
 generated? For instance, a long position could be initiated when $y_t - \hat x_t > z_t$,
 where $z_t$ is a threshold that guarantees a profitable trade, after costs. The question
 then becomes, how should $z_t$ be calibrated? For instance, \cite{Vidyamurthy2004} suggests
 a re-sampling procedure and provides some general guidance. There may exist several other
 alternative ways in which one could define entry points, perhaps based on empirical modeling of
 the extreme values of the $y_t-x_t$ process. Theoretical results on zero-crossing rates for
 autoregressive processes, as in \cite{Cheng1995}, may also be explored. For aggressive
 strategies that execute a trade at each single time tick, \cite{Montana2008a} forecasts
 the one-step ahead expected spread using dynamic regression methods, whereas \cite{Montana2008b}
 embrace the principle of ``learning with experts'' to deal with the uncertaincty involved in future movements on the spread.

\item Analogously to the previous issue, how should an `exit' signal be generated? And shall
a trade be closed at an exit point, or simply reversed so that a long position becomes short, and viceversa?

 \item What stop-loss mechanism can be implemented to make sure that the assumptions on which
 the strategy relies are still satisfied? Surely, if the spread process is no longer believed to
 be mean-reverting, a stop-loss signal should be quickly generated. As will appear clearer later
 (see, for instance, the examples of Section \ref{results}), our estimation procedure can be used
 to monitor mean-reversion sequentially and flag deviations from the acceptable behaviour of the
 spread process as soon as they occur. Related co-integration arguments may also be used, as in \cite{Lin2006}.

 \item How can suitable pairs of assets be chosen in the first place, especially when the
 universe of assets to search from is extremely large? Since arbitrage profits between two
 assets depend critically on the presence of a long-term equilibrium between them (see,
 for instance, \cite{Alexander2002a}), data mining methods built around co-integration
 techniques may be explored, as in \cite{dAspremont2008}. See also \cite{Vidyamurthy2004}
 and \cite{Pole2007} for alternative methods including simple correlation analysis, turning
 point analysis and latent factor models.
\end{itemize}

As a final note, we mention a technique that may be deployed in a
dynamic modeling setting, such as ours, to obtain the spread $y_t =
p^{(1)}_t - \beta_t p^{(2)}_t$ in  a recursive fashion. As noted
before, the regression coefficient is usually estimated on
historical data, but on-line procedures such as recursive least
squares may also be used. The assumption of a time-invariant
regression coefficient $\beta$ could also be released so as to allow
$\beta$ to change slightly over time; such a modification would
capture a time-varying co-integration relationship between the two
asset prices, where this extension deemed necessary. Assuming $T$
historical observations, a regression model with a time-varying
regression coefficient $\beta_t$ minimizes a cost function
\begin{equation} \label{original_cost}
C(\beta; \mu) = \sum_{t=1}^T \left\{p^{(1)}_t - \beta_t p^{(2)}_t  \right\}^2 + \mu
\sum_{t=1}^{T-1} (\beta_{t+1}-\beta_t)^2
\end{equation}
where $\mu \geq 0$ is a scalar determining how much penalization to place on temporal changes in the regression coefficient. When $\mu$ is very large, changes in the coefficient are penalized more heavily and, in the limit $\mu=\infty$, the usual OLS estimate is recovered. A solution to the optimization problem above was originally proposed by \cite{Kalaba1988}. Following their approach, called \emph{flexible least squares} (FLS), a recursive estimator for each $\beta_t$ can easily be derived as
\begin{equation}
\hat{\beta}_t =
\left[S_{t-1}+\{p_t^{(2)}\}^2\right]^{-1}\left\{s_{t-1}+p_t^{(1)}p_t^{(2)}\right\}
\label{eq:Betat_1}
\end{equation}
where we have defined the quantities
\begin{gather}
S_t=\mu \left[S_{t-1}+\mu I_p +\{p_t^{(2)}\}^2\right]^{-1} \left\{S_{t-1}+\{p_t^{(2)}\}^2\right\}\label{eq:St} \\
s_t = \mu \left[S_{t-1}+\mu I_p
+\{p_t^{(2)}\}^2\right]^{-1}\left\{s_{t-1}+p_t^{(1)}p_t^{(2)}\right\}\nonumber
\end{gather}
The recursions are initially started with some arbitrarily chosen
values $S_1$ and $s_1$. \cite{Montana2008} show a clear algebraic
connection between FLS and the Kalman filter and use this estimation
method to develop a dynamic statistical arbitrage strategy.

\section{Illustrations} \label{results}

\subsection{Simulated data}\label{simulations}

In this section we initially report on a Monte Carlo simulation
study demonstrating that the fast recursive algorithm \eqref{algo}
described in Section \ref{bayesian} accurately estimates the
parameters of the proposed model. We have simulated a large number
of time series under model \eqref{model1} using a range of values
for $A,B$ and $\sigma^2$. The true parameters are kept constant in
these initial simulations for simplicity, so they can be easily
compared with the estimated posterior means. We have found that convergence to the true
parameters $A$, $B$ and $\sigma^2$ is quickly achieved and the estimated
values of these parameters are not sensitive to the initial
parameters $\mu_1,P_1,n_1,d_1$ (results not shown).

We have also explored situations in which the parameters are
time-varying. First, we have considered the case of a sudden change
in the level of the spread; the time series fluctuates around an
equilibrium level till $t=1500$, and after that time it jumps to a
much higher equilibrium. Clearly for $1\leq t\leq 1499$ the process
is mean reverting, then at $t=1500$ it looses mean reversion, but it
retains it in the sub-period $1500\leq t\leq 3000$; of course the
process is not mean reverted for the entire period $1\leq t\leq
3000$. Figure \ref{fig:Ajumpest} shows how the posterior mean of
$|B_t|$ is tracked using two different values of the discount factor
$\delta_2$. Our focus is on monitoring $B_t$ because, as established
in Theorem \ref{th1}, this parameter is the ultimate object of
interest. As shown in Figure \ref{fig:Ajumpest}, the algorithm with
$\delta_2=1$ (which corresponds to a model with time-invariant
parameters) does not manage to capture the loss of mean-reversion
observed at time $t=1500$; in fact the algorithm gives the
misleading result of mean reversion throughout the time range. On
the contrary, when using a smaller discount factor (which
corresponds to a model with time-varying parameters), the algorithm
tracks the jump almost in real-time and communicates the result that
after $t=1500$ the process has locally regained mean reversion.

Furthermore, we have considered a more hypothetical scenario that
may be of practical interest: Figure \ref{fig:Bjump} corresponds to
a scenario where $B_t$ is piece-wise constant and undergoes a large
sudden jump at time $t=1500$. Again, the algorithm is able to track
well mean reversion locally, although the true parameter $B_t$ may
not be estimated very accurately.

\vspace*{1cm}
\begin{center}
FIGURES 1-2 AROUND HERE
\end{center}
\vspace*{1cm}

\subsection{Equity data} \label{equity}

In this section we apply our methods to spreads obtained from
historical equity data. Each spread is computed using the flexible
least squares (FLS) method with a very large $\mu$ parameter; this
is almost equivalent to ordinary least squares (OLS) regression but
allows for recursive estimation. As a simple validation exercise, we
also compare the findings obtained from our model to formal
cointegration tests which assume the availability of all data
points. The very first procedure for the estimation of cointegrating
regressions, based on OLS, was proposed by \cite{Engle1987}. Since
then several other procedures have been developed including the
maximum likelihood method of \cite{Johansen1988, Johansen1991} and
the fully modified OLS of \cite{Phillips1990}. \cite{Hargreaves1994}
lists eleven categories of procedures, and several more have been
added in more recent years. For our analysis we have considere only
three popular tests: Engle-Granger's ADF test \citep{Engle1987},
Phillips-Perron's PP test \citep{Perron1988} and Phillips-Ouliaris's
PO test \citep{Phillips1990a}.

The first data sets we present consists of daily share prices of two
companies: Exxon Mobil (XOM) and Southwest Airlines (LUV). We have
used all the available data for this pair of stocks, which spans a
period from March 23, 1980 to August 6, 2008. Figure \ref{spread1B}
reports the estimated posterior mean of $B_t$ and its confidence
band for the period March 23, 1980 to November 30, 2004. Clearly,
from March 23, 1980 till November 8, 2004 the posterior mean of
$|B_t|$ stays below one, which according to Theorem \ref{th1}
indicates mean-reversion of the spread time series. Figure
\ref{spread1} shows the observed spread time series as well as the
estimated hidden state process and its posterior confidence band for
this subperiod of the data. For the estimation of $(A_t,B_t)'$ we
have used $\phi_1=0.1$, $\phi_2=99839$, $\delta_1=0.992$ and
$\delta_2=0.995$ that maximize the log-likelihood function
(\ref{LogL}).

When using all historical data (1980-2008), all three standard
cointegration tests cannot reject the null hypothesis of unit roots
($p$-values: $0.246$, $0.219$ and $0.15$). This is in agreement with
the patterns captured by Figure \ref{spread1B}, which reveals that
after November 8, 2004, mean reversion is lost. However, when the
analysis is restricted to the period November 8, 2004, both the PP
and PO tests reject the null hypothesis of unit roots at a $5\%$
significance level ($p$-values: $0.013$ and $0.024$, respectively).
The ADF test, however, disagrees and does not reject the null
hypothesis of unit roots ($p$-value $0.139$). Thus, in this example,
only two out of three tests agree with the evidence provided by our
on-line monitoring device.

Our second example illustrates a co-integration relationship
existing between two ETFs operating in the commodity market. ETFs
are relatively new financial instruments that have exploded in
popularity over the last few years. They are securities that combine
elements of both index funds and stocks: like index funds, they are
pools of securities that track specific market indexes at a very low
cost; like stocks, they are traded on major stock exchanges and can
be bought and sold anytime during normal trading hours. We have
collected historical time series for the SPDR Gold Shares (GLD) and
Market Vectors Gold Miners (GDX) ETFs. GLD is an ETF that tries to
reflect the performance of the price of gold bullion, whereas GDX
tries to replicate as closely as possible, before fees and expenses,
the price and yield performance of the AMEX Gold Miners index. This
is achieved by investing in all of the securities which comprise the
index (in proportions given by their weighting in the index). This
analysis is based upon all the historical data available  for the
pair, which covers a shorted period compared to the previous
example, from May 23, 2006 until August 06, 2008. Figure
\ref{spread2} shows the observed spread process jointly with the
estimated hidden process and confidence bands, while Figure
\ref{spread2B} indicates that a co-integrating relationship between
the two ETFs does exist in the period from July 19, 2006 till 17
December, 2007. For this data set we have used $\phi_1=0.999$,
$\phi_2=99$, $\delta_1=0.95$ and $\delta_2=0.98$ that maximize the
log-likelihood function (\ref{LogL}).

When all the historical data is used, the ADF and the PP tests
indicate the presence of co-integration at a $5\%$ significance
level ($p$-values: $0.01$ and $0.01$, respectively) and only the PO
test suggest lack of co-integration, a result that also agrees with
the pattern reported in Figure \ref{spread2B}. Considering the
period July 19, 2006 till 17 December, 2007, for which our results
suggest mean reversion, we find that all three tests also suggest
co-integration ($p$-values: $0.0201$, $0.013$ and $0.012$). Further
formal comparisons and more detailed studies will be needed in order
to characterize some of the discrepancies; however, based on this
empirical evidence, our suggested time-varying model seems to
generally agree with most formal cointegration tests.

\bigskip

\vspace*{1cm}
\begin{center}
FIGURES 3-6 AROUND HERE
\end{center}
\vspace*{1cm}

\bigskip

\section{Conclusions} \label{discussion}

In this paper we have proposed a Bayesian time-varying
autoregressive model, expressed in time-space form, and an efficient
recursive algorithm based on forgetting or discount factors. The
procedure can be used for real-time estimation and tracking of the
underlying spread process and may be seen as a more efficient
alternative to standard iterative MLE procedures such as the EM
algorithm. Conditions for mean-reversion as well as the convergence
properties of the on-line estimation algorithm have been studied
analytically and discussed. The model seems particularly useful for
monitoring mean-reversion using financial data streams and as a
building block for statistical arbitrage strategies such as pairs
trading. Related algorithmic trading strategies that exploit
co-integration of financial instruments, for instance \emph{index
arbitrage} \citep{Sutcliffe2006} and \emph{enhanced index tracking}
\citep{Alexander2002a}, may also benefit from the methods proposed
here. Moreover, although the focus of this work has been on
applications in computational finance, we believe that the methods
described here are of broader interest and may appeal to other
users, within the management science community, who need to model
and monitor mean-reverting time series arising in different
application domains

There are several aspects of the suggested methodology that we would
like to explore further in future work. First, purely from an
empirical point of view, we would like to better understand how the
methodology relates to more formal statistical procedures for
testing the hypothesis of mean reversion based on finite sample
sizes. As already mentioned, since mean-reversion is closely linked
to second order stationarity, many efforts have been directed to
constructing unit root tests. These standard econometric procedures
may lack the power to reject the null hypothesis of a random walk,
and we feel that our method may at least complement them well.
Besides, some of the recently suggested procedures, such as the
bootstrap methods described by \cite{Li2003}, are too
computationally expensive to be of any use in the real settings and
applications that we have described. Another important aspect that
we plan to investigate is the question of how to learn the
discounting factors needed to specify the $V_t$ matrix in a more
adaptive fashion, so that they become self-tuning, rather than being
kept constant at all times. A number of techniques have been
successfully used for training adaptive artificial neural networks
and other time-varying stochastic processes using forgetting factors
\citep{Saad1999, Niedzwiecki2000} and there may be scope for
improvement along this direction.

\section*{Acknowledgements}
We would like to thank three anonymous referees for their helpful comments on an earlier draft of the paper.

\renewcommand{\theequation}{A-\arabic{equation}} 
\setcounter{equation}{0}  
\section*{Appendix}

\begin{proof}[Proof of Theorem \ref{th1}]
With $\phi_1=\phi_2=1$ and $V_t=0$, the state space model
(\ref{model1}) reduces to the AR model
$y_t=A+By_{t-1}+\epsilon_t$, where $A_t=A$ and $B_t=B$ and it is
trivial to verify that $\{y_t\}$ is mean reverting if
$|B_1|<1$, see also Section \ref{mean_reversion}. This completes (a).

Proceeding now to (b), from the AR model for $A_t$ we note that
$E(A_t)=0$. From (\ref{model1}) write $y_t$ recursively as
\begin{eqnarray*}
y_t&=&A_t+B_ty_{t-1}+\epsilon_t=A_t+B_tA_{t-1}+B_tB_{t-1}y_{t-2}+B_t\epsilon_{t-1}+\epsilon_t
= \cdots \\ &=& y_1\prod_{i=2}^t B_i +\sum_{j=0}^{t-3} \prod_{i=0}^j
B_{t-i}A_{t-j-1}+A_t+\sum_{j=0}^{t-3}\prod_{i=0}^jB_{t-i}\epsilon_{t-j-1}+\epsilon_t
\end{eqnarray*}
We write $A^t=(A_1,\ldots,A_t)$ and $B^t=(B_1,\ldots,B_t)$, for
$t=1,\ldots,T$. Since $\{\epsilon_t\}$ is white noise, we have
\begin{equation}
E(y_t|B^t)=y_1\prod_{i=2}^tB_i \label{eq1:proof}
\end{equation}
This is a convergent series if $|B_t|<1$, for all
$t>t_0$, for some positive integer $t_0$. To see this first write
$x_t^{(1)}=\prod_{i=2}^tB_i$, which is a decreasing series as
$|x_{t+1}^{(1)}/x_t^{(1)}|=|B_{t+1}|<1$. Also $\{x_t^{(1)}\}$ is
bounded as $|x_t^{(1)}|=\prod_{i=2}^t|B_i|<1$ and so
$\{x_t^{(1)}\}$ is convergent.

For the variance of $y_t$ we have
\begin{eqnarray}
\var(y_t|B^t) &=& \var(A_t) + \sum_{j=0}^{t-3} \prod_{i=0}^j
B_{t-i}^2 \var(A_{t-j-1}) + \sum_{j=0}^{t-3} \prod_{i=0}^j B_{t-i}^2
\var(\epsilon_{t-j-1}) \nonumber \\ && + \var(\epsilon_t) +
\sum_{j=0}^{t-3} \prod_{i=0}^j B_{t-i} \cov (A_t,A_{t-j-1})
\nonumber \\ &\leq & \sigma^2+\frac{\sigma^2 V_{11}}{1-\phi_1^2} +
\left(\frac{\sigma^2 V_{11}}{1-\phi_1^2}+\sigma^2\right)
\sum_{j=0}^{t-3} \prod_{i=0}^j B_{t-i}^2  + \frac{\sigma^2
V_{11}}{1-\phi_1^2} \sum_{j=0}^{t-3} \phi_1^{j+1} \prod_{i=0}^j
B_{t-i} \nonumber
\end{eqnarray}
where it is used that
$$
\var(A_t)\leq\frac{\sigma^2V_{11}}{1-\phi_1^2}  \quad \textrm{and}
\quad \cov(A_t,A_{t-j-1})\leq\frac{\sigma^2V_{11}}{1-\phi_1^2}
$$
for $V_{11,t} \leq V_{11}$, since from the hypothesis $V_t$ is
bounded, and so there exists some $V_{11}>0$ so that $V_{11,t}\leq
V_{11}$.

Now we show that the series $x_t^{(2)}=\sum_{j=0}^{t-3}
\prod_{i=0}^j B_{t-i}^2$ and $x_t^{(3)}=\sum_{j=0}^{t-3}
\phi_1^{j+1} \prod_{i=0}^j B_{t-i}$ are both convergent. For the
former series we note that given $|B_t|<1$, we can find some $B$
so that $|B_t|<|B|<1$, from which it follows that
$$
|x_t^{(2)}|\leq \sum_{j=0}^{t-3}\prod_{i=0}^j|B_{t-i}|\leq
\sum_{j=0}^{t-3}\prod_{i=0}^j|B|=\sum_{j=0}^{t-3}|B|^{j+1}
$$
which is proportional to a geometric series that converges for
$|B|<1$ and since $x_t^{(2)}$ is a positive series, it follows
that $\{x_t^{(2)}\}$ is convergent.

For the series $x_t^{(3)}$, we follow an analogous argument, i.e.
for $B$ satisfying $|B_t|<|B|<1$ we obtain
$$
|x_t^{(3)}|\leq \sum_{j=0}^{t-3} |\phi_1B|^{j+1}
$$
which shows that $x_t^{(3)}$ is convergent as $\sum_{j=0}^{t-3}
|\phi_1B|^{j+1}$ is a geometric series with $|\phi_1B|<1$ and
$x_t^{(3)}$ is a positive series.

With these convergence results in place, the convergence of
$\var(y_t|B^t)$ is obvious. Given, $B^t$, we have shown that the
mean and the variance of $\{y_t\}$ are convergent and so $\{y_t\}$
is mean reverting.
\end{proof}

\begin{proof}[Proof of Theorem \ref{th2}]
From the diagonal structure of
$P_t=\textrm{diag}(p_{11,t},p_{22,t})$ and the updating of $P_t$ as
in the calibration algorithm (\ref{algo}) we have
\begin{eqnarray*}
p_{ii,t} &=&\frac{\phi_i^2p_{ii,t-1}}{\delta_i} -
\frac{a_{it}\phi_i^4p_{ii,t-1}^2\delta_i^{-2}}{a_{it}\phi_i^2p_{ii,t-1}
\delta_i^{-1} +1} = \frac{\phi_i^2p_{ii,t-1}}{\delta_i}  \left(
1-\frac{a_{it}\phi^2p_{ii,t-1}}{\delta_i+a_{it}\phi_i^2p_{ii,t-1}} \right) \\
&=& \frac{\phi_i^2p_{ii,t-1}}{\delta_i+a_{it}\phi_i^2p_{ii,t-1}}
\end{eqnarray*}
We can clearly see that $p_{ii,t}>0$, for all $t$ and so we have
\begin{equation}\label{liminv}
\frac{1}{p_{ii,t}} = \frac{\delta_i}{\phi_i^2p_{ii,t-1}}+a_{it}  =
\frac{\delta_i^{t-1}}{\phi_i^{2t-2}p_{ii,1}}+\sum_{j=0}^{t-2}
\left(\frac{\delta_i}{\phi_i^2}\right)^ja_{i,t-j}
\end{equation}
Now since $\delta_i^j\phi_i^{-2j}a_{i,t-j}$ is a positive sequence and
since from the mean reversion of $\{y_t\}$ and the definition of
$a_{it}$, the above sequence is bounded above by the geometric sequence
$\delta_i^j\phi_i^{-2j} M$, where $M$ is an upper bound of
$\{y_t\}$, it follows immediately that $\sum_{j=0}^\infty
\delta_i^j\phi^{-2j}a_{i,t-j}<\infty$ and this proves that $p_{ii,t}$
converges to $\sum_{j=0}^\infty \delta_i^j\phi^{-2j}a_{i,t-j}$. The
proof is completed by inverting (\ref{liminv}), noting that
$p_{ii,t}>0$ and $\sum_{j=0}^\infty \delta_i^j\phi^{-2j}a_{i,t-j}>0$.
\end{proof}

\bibliographystyle{plainnat}
\bibliography{bibliography}


\begin{figure}
\includegraphics[scale=0.75]{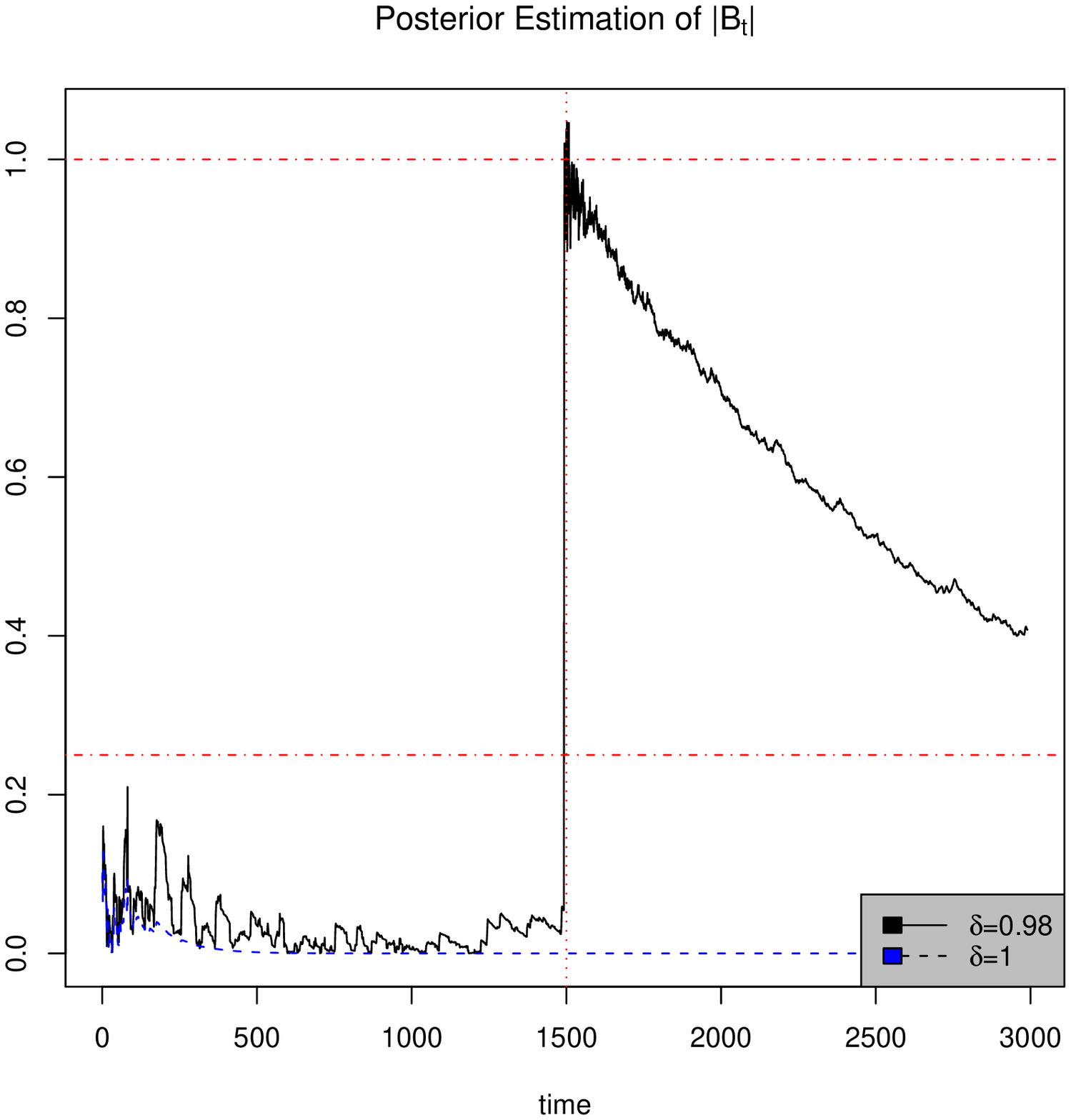}
\caption{Estimation of $|B_t|$ for the simulated spread with a jump
at $t=1500$. We have chosen a prior $P_1=1000I_2$. A value of
$\delta_1=\delta_2=\delta=1$, which corresponds to the adoption of a
time-invariant model, fails to capture mean-reversion following
immediately after the change of equilibrium ar time $t=1501$.
However, forgetting factors $\delta_1=1$ and $\delta_2=\delta=0.98$
tracks the abrupt change in mean level and and following quick
restoration of mean-reversion.} \label{fig:Ajumpest}
\end{figure}

\newpage


\begin{figure}
\includegraphics[scale=0.75]{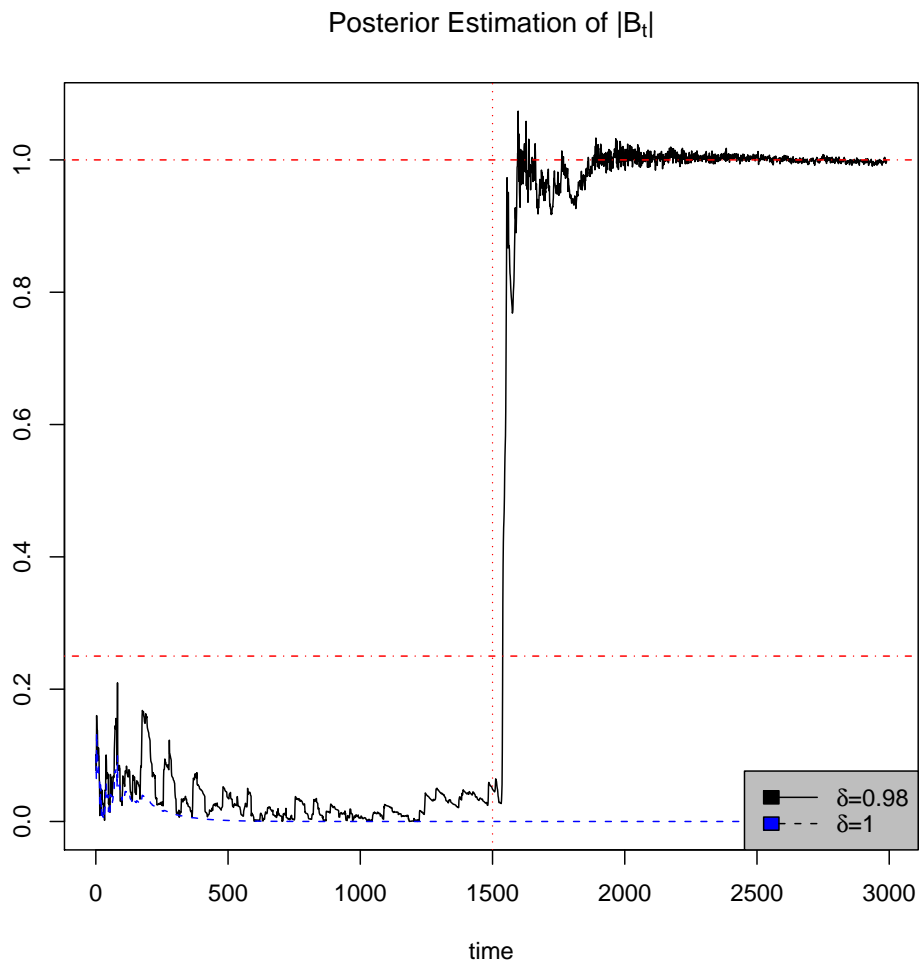}
\caption{Estimation of abruptly varying $B_t$; shown is the
posterior mean of $|B_t|$. The real parameters are $A=0.2$,
$B_t=0.25$ and $\sigma^2=1$, for $1\leq t\leq 1500$; $A=0.2$,
$B_t=1$ and $\sigma^2=1$, for $1501\leq t\leq 3000$. We have chosen
a prior $P_1=1000I_2$ and $\delta_1=1$. Two selected values of
$\delta_2=\delta$ are used, $\delta=1$ and $\delta=0.98$.}
\label{fig:Bjump}
\end{figure}

\newpage

\begin{figure}
\includegraphics[scale=0.75]{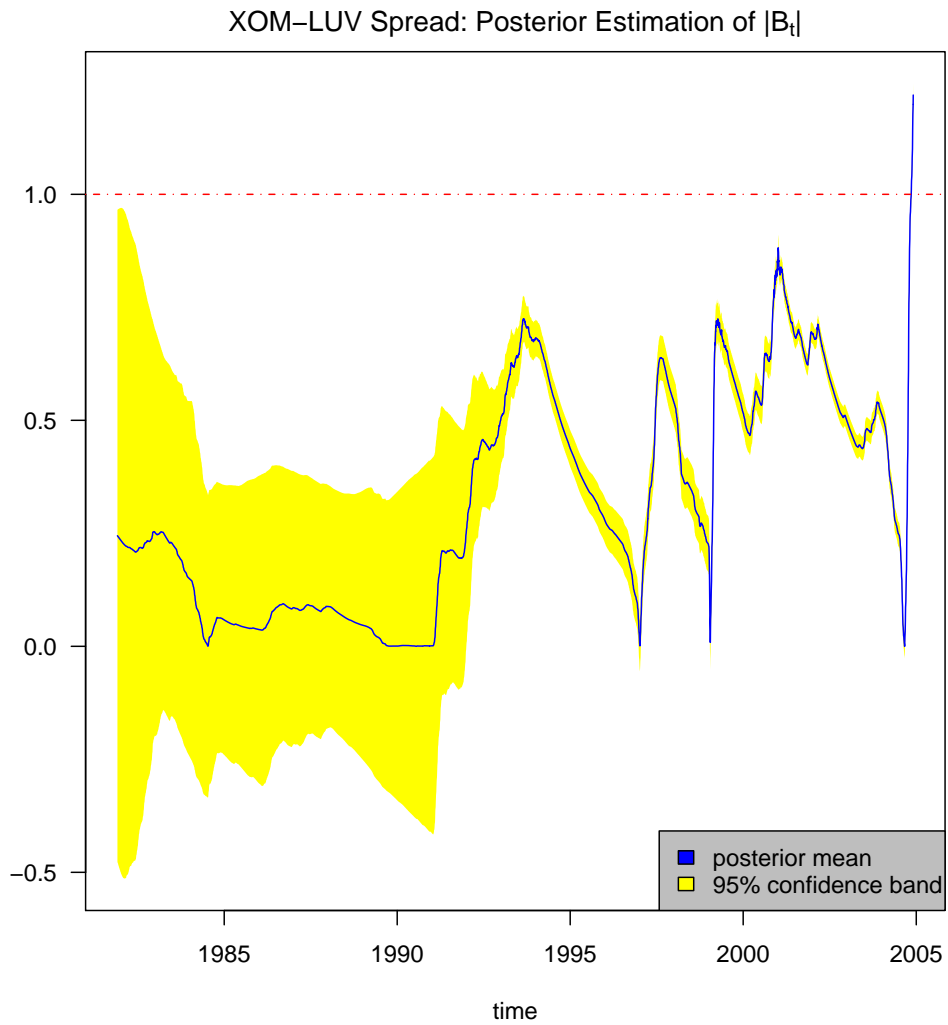}
\caption{Posterior estimation of $|B_t|$. We have used $\phi_1=0.1$,
$\phi_2=99839$, $\delta_1=0.992$, $\delta_2=0.995$ and a prior
$P_1=1000I_2$.} \label{spread1B}
\end{figure}

\newpage

\begin{figure}
\includegraphics[scale=0.75]{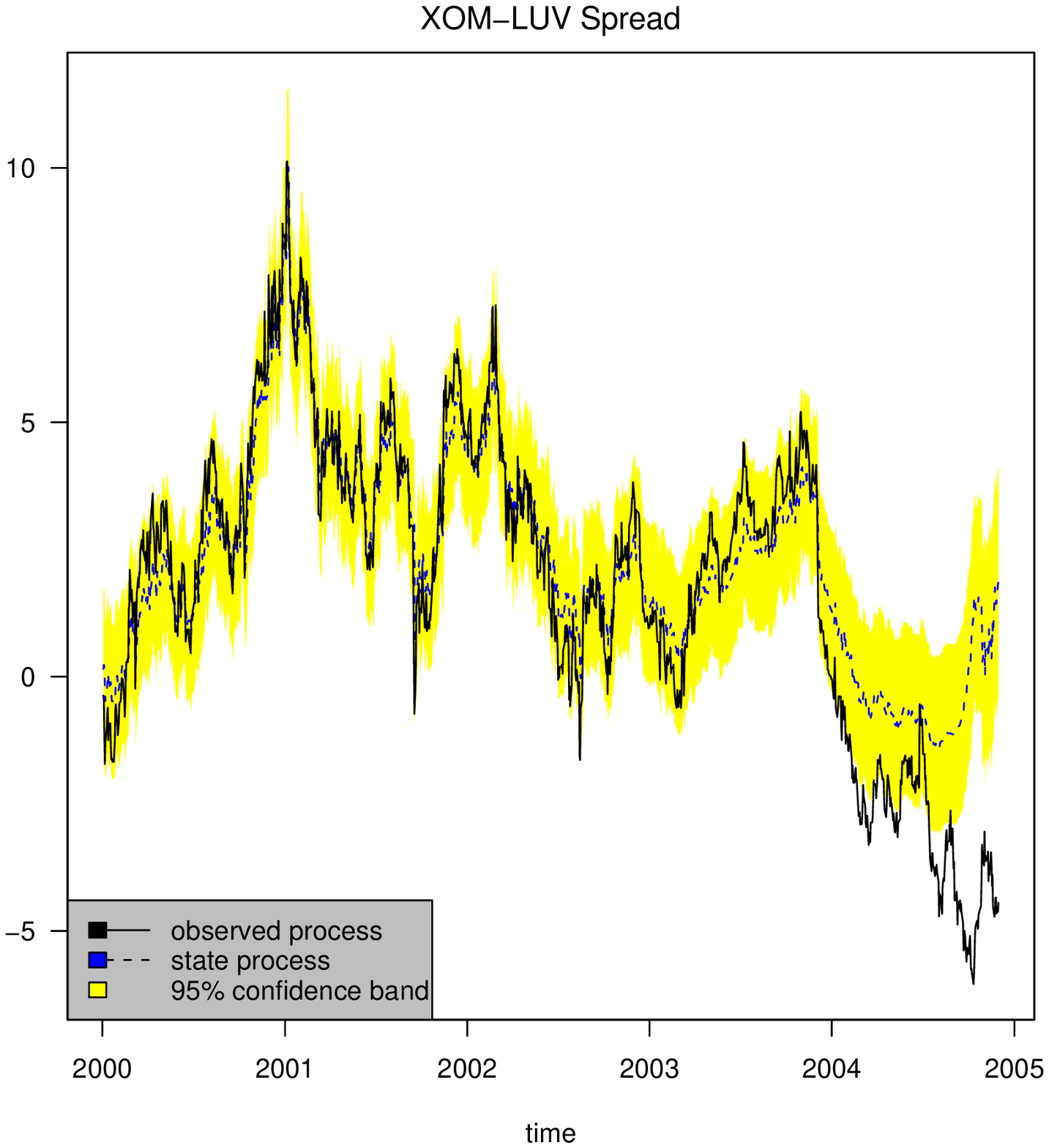}
\caption{Observed spread and state spread using a recursive
regression routine for on-line spread availability.}
\label{spread1}
\end{figure}

\newpage

\begin{figure}
\includegraphics[scale=0.75]{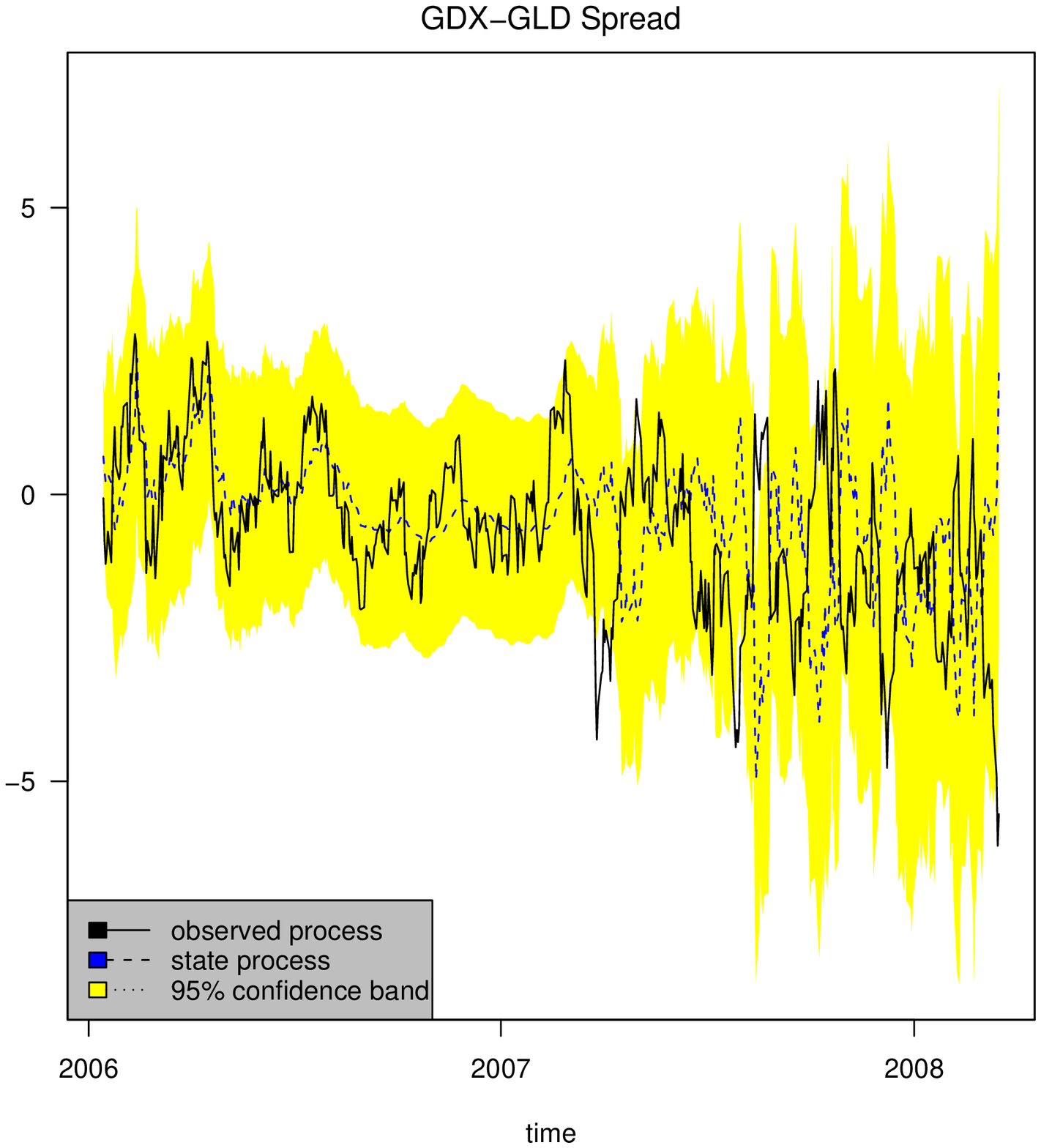}
\caption{Observed spread and state spread using a recursive
regression routine for on-line spread availability. }
\label{spread2}
\end{figure}

\newpage
\begin{figure}
\includegraphics[scale=0.75]{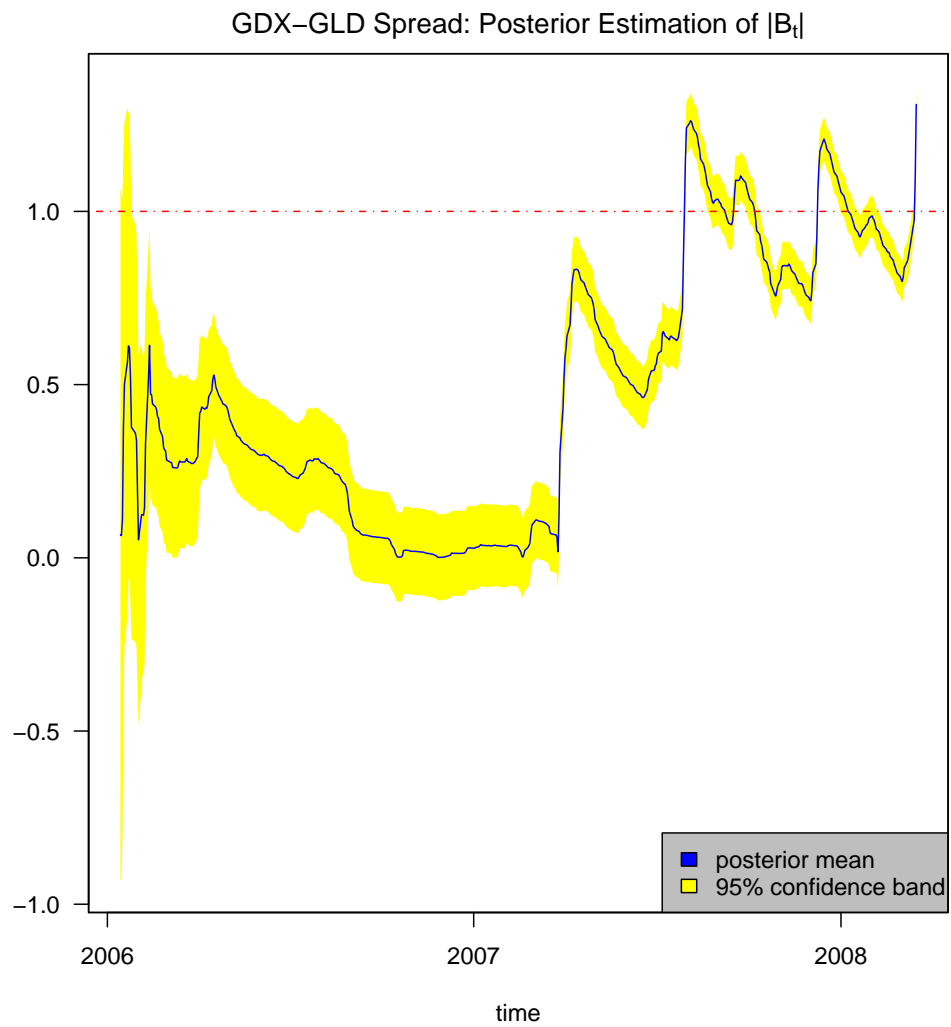}
\caption{Posterior estimation of $|B_t|$. We have used
$\phi_1=0.999$, $\phi_2=99$, $\delta_1=0.95$, $\delta_2=0.98$ and a
prior $P_1=1000I_2$.} \label{spread2B}
\end{figure}

\end{document}